\documentclass[journal,twoside,10pt]{IEEEtran}

\usepackage{amsmath,amssymb,amsfonts,amsthm}
\usepackage{accents} 
\usepackage{algorithm}
\usepackage{algpseudocode}
\algrenewcommand\textproc{} 
\usepackage{balance}
\usepackage{bm}
\usepackage{booktabs}
\usepackage{cite}
\usepackage{enumitem}
\usepackage[T1]{fontenc}
\usepackage{graphicx}
\usepackage{mathrsfs}
\usepackage{mathtools}
\usepackage{multirow}
\usepackage{orcidlink}
\usepackage{siunitx}
\usepackage{soul}               
\usepackage{threeparttable}
\usepackage[color=green!40]{todonotes}
\soulregister\cite7
\soulregister\footnote7
\soulregister\eqref7
\soulregister\ref7
\usepackage{xcolor}
\usepackage{xstring} 
\usepackage{url}
\usepackage{hyperref}
\newlength{\myvspace}
\newtheorem{definition}{Definition}
\newtheorem{lemma}{Lemma}
\newtheorem{theorem}{Theorem}

\DeclareMathOperator*{\argmin}{arg\,min}

\DeclareMathOperator*{\fp}{fp}

\newcommand{\bigO}{\mathcal{O}}
\newcommand{\cpp}{C\texttt{++}}
\newcommand{\td}{\triangledown}

\newcommand{\mycase}[1]{\smallskip\noindent\textit{Case #1}} 

\graphicspath{{./}{figures/}{../figures/}}

\begingroup
\everyeof{\noexpand}
\xdef\normaljobname{\scantokens\expandafter{\jobname}}
\endgroup

\newcommand{\myauthor}{Kyeong Soo
  Kim$^{\orcidlink{0000-0002-4123-2647}}$,~\IEEEmembership{Senior~Member,~IEEE}}%
\newcommand{\mytitle}{Space-Time Trade-off in Integer Linear Scaling Rounded to
  the Nearest Integer through Multiplicative and Additive Decomposition}%

\begin{document}

\title{\mytitle}

\author{%
  \myauthor%
  \thanks{%
    This work was supported in part by the Postgraduate Research Scholarships
    (under Grant FOSA2412040 and PGRS1912001) of Xi'an Jiaotong-Liverpool
    University.

    K. S. Kim is with the Department of Communications and Networking, School of
    Advanced Technology, Xi'an Jiaotong-Liverpool University, Suzhou 215123,
    P. R. China (e-mail: Kyeongsoo.Kim@xjtlu.edu.cn).%
  }%
}%


\maketitle

\begin{abstract}
  We have been investigating clock skew compensation immune to floating-point
  precision loss by taking into account the discrete nature of clocks in digital
  communication systems, through which we have constructed more efficient
  incremental error algorithms using only integer operations by extending
  Bresenham's line drawing algorithm. In this paper, we formulate the problem of
  clock skew compensation as a special case of the integer linear scaling in the
  form of $i\frac{D}{A}$, where there is no offset and the scale factor is a
  product of two integers, and propose two algorithms---i.e., the
  \textit{multiplicative decomposition of integer division} and the
  \textit{additive decomposition of direct search}---for its nearest integer
  solution, which are not only immune to floating-point precision loss but also
  non-incremental unlike our prior approaches based on Bresenham's
  algorithm. Having theoretically established both decomposition algorithms
  based on a unified and rigorous formulation of the problem of the integer
  linear scaling rounded to the nearest integer, we discuss the space-time
  trade-off through the analysis of their computational complexities and
  non-overflow conditions. The numerical examples demonstrate the relative
  advantages and disadvantages of the two algorithms in a practical context of
  clock skew compensation under two different scenarios based on 32-bit and
  64-bit integers: We observe that the multiplicative decomposition of integer
  division algorithm can obtain the nearest integer solutions with the
  complexity of $\bigO(1)$ when $D$ is much smaller than the maximum value of
  the underlying integer type but overflows otherwise; in comparison, the
  additive decomposition of direct search algorithm can handle all the cases
  under both scenarios without overflows but at the expense of increased
  computational complexity when $i$ approaches the maximum value of the
  underlying integer type. We also observe that the additive decomposition of
  direct search algorithm based on 32-bit integers is equivalent to the clock
  skew compensation based on 64-bit double-precision floating-point arithmetic,
  while both algorithms based on 64-bit integers are equivalent to the clock
  skew compensation based on 128-bit quadruple-precision floating-point
  arithmetic, which highlights another trade-off between the bounded
  compensation errors and lower space complexity of the integer-based
  decomposition algorithms and the lower chances of overflows resulting from the
  wide ranges of numbers of the clock skew compensation based on floating-point
  arithmetic.
\end{abstract}

\begin{IEEEkeywords}
  Integer linear scaling, multiplicative decomposition, additive decomposition,
  non-incremental error algorithm, floating-point arithmetic, clock skew
  compensation.
\end{IEEEkeywords}

\section{Introduction}
\label{sec:introduction}
\IEEEPARstart{O}{serving} that the performance of clock skew compensation based
on floating-point arithmetic does not match with the prediction from theories
and simulations due to the use of 32-bit single-precision floating-point
arithmetic on resource-constrained platforms like wireless sensor network (WSN)
nodes and Internet of Things (IoT) devices, we have been investigating clock
skew compensation algorithms immune to floating-point precision loss: In
\cite{Kim:22-1}, we have proposed an incremental error algorithm based on the
extension of Bresenham's line drawing algorithm~\cite{bresenham65:_algor}, which
takes into account the discrete nature of clocks in digital communication
systems and thereby eliminates the effect of limited floating-point precision on
clock skew compensation. In \cite{Kang:24}, we have complemented the algorithm
by providing practical as well as theoretical bounds on the initial value of a
skew-compensated clock by floating-point division based on systematic analyses
of the errors of floating-point operations~\cite{jeannerod18}.

Note that the clock skew compensation algorithms immune to floating-point
precision loss studied in~\cite{Kim:22-1,Kang:24} are for a special case of
linear scaling, where the $y$-intercept is zero, the slope is given by a ratio
of two integers, and both domain and codomain are limited to integers. In fact,
the integer linear scaling problem has been studied for various applications,
including line drawing and pixel-perfect scaling in image/video processing and
visualization
~\cite{bresenham65:_algor,marti00:_memor,pitteway97,bostock11:_d3}, the
operation and data processing for the large area telescope (LAT) instrument in
the Gamma-ray large area space telescope (GLAST) satellite in
astronomy~\cite{lat_flight_sw}, and integer-only inference in machine
learning~\cite{liu25:_effic_detr,hu25:_tin_tin,wu20:_integ} in addition to clock
skew compensation in networking~\cite{Kim:20-1,Kim:22-1,Kang:23,Kang:24}, but
not based on a common framework and, oftentimes, without formal proofs and
systematically taking into account rounding errors. In this paper, we generalize
and formulate the problem of clock skew compensation as the \textit{integer
  linear scaling} problem in the form of $i\frac{D}{A}$ and propose two
decomposition algorithms that can avoid floating-point precision loss while
reducing overflows in obtaining the nearest integer solution based on a
fixed-width integer type.

The major contributions of our work can be summarized as follows: First, we
provide a unified and rigorous formulation of the problem of the integer linear
scaling rounded to the nearest integer. Second, based on the unified
formulation, we propose two decomposition algorithms---i.e., the
\textit{multiplicative decomposition of integer division} and the
\textit{additive decomposition of direct search}---for the nearest integer
solution to the integer linear scaling problem, which are not only immune to
floating-point precision loss but also non-incremental unlike our prior
approaches based on Bresenham's algorithm. Third, we discuss the space-time
trade-off in obtaining the nearest integer solution to the integer linear
scaling problem through the analysis of the computational complexities and the
non-overflow conditions of the proposed decomposition algorithms. Fourth, we
demonstrate the relative advantages and disadvantages of the two decomposition
algorithms through extensive numerical examples in a practical context of clock
skew compensation under two different scenarios based on 32-bit and 64-bit
integers, which also highlights another trade-off between the bounded
compensation errors and lower space complexity of the integer-based
decomposition algorithms and the lower chances of overflows resulting from the
wide ranges of numbers of the clock skew compensation based on floating-point
arithmetic.

The rest of the paper is organized as follows: In
Section~\ref{sec:csc-algo-review}, we review clock skew compensation algorithms
immune to floating-point precision loss with the underlying hardware clock
model. Section~\ref{sec:ils-rttn} describes the multiplicative decomposition of
integer division and the additive decomposition of direct search algorithms for
the integer linear scaling rounded to the nearest integer and discusses their
space-time trade-off through the analysis of the computational complexities and
the non-overflow conditions. Section~\ref{sec:numerical-examples} presents
numerical examples to further demonstrate the space-time trade-off in obtaining
the nearest integer solution to the integer linear scaling problem in the
practical context of clock skew compensation. We conclude our work in
Section~\ref{sec:conclusions}.

\section{Review of Clock Skew Compensation Algorithms Immune to Floating-Point
  Precision Loss}
\label{sec:csc-algo-review}
Time synchronization in networking provides a common time frame among network
nodes~\cite{wu11:_clock_synch_wirel_sensor_networ}, and we began our
investigation of the impact of limited precision floating-point arithmetic on it
at resource-constrained WSN/IoT nodes with the energy-efficient time
synchronization based on asynchronous source clock frequency recovery---EE-ASCFR
in short---scheme proposed in~\cite{Kim:17-1}; EE-ASCFR is based on the idea of
the separation of the clock frequency estimation/compensation at sensor nodes
and the clock offset and delay estimation at the head node and thereby can
reduce the complexity and power consumption of sensor nodes for time
synchronization by moving most of time synchronization operations to the head
node. The investigation of the practical implementation of EE-ASCFR on a real
WSN testbed composed of TelosB motes running TinyOS, however, reveals that its
performance on battery-powered, low-complexity sensor nodes is not up to that
predicted from theories and simulations due to the limited 32-bit
single-precision floating-point arithmetic of sensor nodes~\cite{Kim:20-1}.

Because we focus on the clock skew compensation at a sensor node in EE-ASCFR, we
consider the first-order affine clock model~\cite{chepuri13:_joint} describing
the hardware clock $T$ of the sensor node with respect to the reference clock
$t$ of the head node but without a clock offset as in~\cite{Kim:22-1,Kang:24}:
\begin{equation}
  \label{eq:hw_clock_model}
  T(t) = \left(1+\epsilon\right)t
\end{equation}
where $\epsilon{\in}\mathbb{R}$ is the clock skew. Compensating for the clock
skew from the hardware clock $T$ in \eqref{eq:hw_clock_model}, we can obtain the
logical clock $\hat{t}$ of the sensor node---i.e., the estimation of the
reference clock $t$ given the hardware clock $T$---as follows: For
$t_{i}{<}t{\leq}t_{i+1}~(i{=}0,1,{\ldots})$,
\begin{equation}
  \label{eq:logical_clock_model}
  \hat{t}\Big(T(t)\Big) = \hat{t}\Big(T(t_{i})\Big)
  + \dfrac{T(t)-T(t_{i})}{1 + \hat{\epsilon}_{i}},
\end{equation}
where $t_{i}$ is the reference time for the $i$th synchronization between the
head and the sensor node and $\hat{\epsilon}_{i}$ is the estimated clock skew
from the $i$th synchronization. The major issue is the floating-point division
required for the calculation of the second term of rhs in
\eqref{eq:logical_clock_model}, i.e.,
$\frac{T(t){-}T(t_{i})}{1{+}\hat{\epsilon}_{i}}$, which is the skew-compensated
increment of the hardware clock since the $i$th synchronization.

Because clocks in digital communication systems are basically discrete counters
and timestamps exchanged among nodes are their values, we proposed an
incremental error algorithm to obtain the second term of rhs in
\eqref{eq:logical_clock_model} using only integer addition/subtraction and
comparison by extending Bresenham's line drawing algorithm
in~\cite{Kim:22-1,Kang:24}: If $\frac{D}{A}$ is the inverse of a clock frequency
ratio (i.e., $\frac{1}{1{+}\epsilon_{i}}$) estimated based on two positive
integers $D$ and $A$, where $D$ and $A$ represent interdeparture and
interarrival times of packets or their cumulative sums from the previous
synchronization~\cite{Kim:13-1}, the clock skew can be compensated for based on
Theorem~1:
\begin{theorem}[Clock Skew Compensation Based on the Extended Bresenham's
  Algorithm~\cite{Kim:22-1,Kang:24}]
  \label{thm:csc_with_optimal_bounds}
  Given a hardware clock $i$, we can obtain its skew-compensated clock $j$ as
  follows:
	
  \smallskip%
  \noindent
  \textit{Case 1.} $\frac{D}{A}{<}1$: The skew-compensated clock $j$ satisfies
  \begin{equation}
    \label{eq:csc-scc_bounds}
    \left\lfloor{i\frac{D}{A}}\right\rfloor \leq j \leq \left\lceil{i\frac{D}{A}}\right\rceil.
  \end{equation}
  Unless $i\frac{D}{A}$ is an integer, there are two values satisfying
  \eqref{eq:csc-scc_bounds}. Because we cannot know the exact value of
  $i\frac{D}{A}$ due to limited floating-point precision, however, we extend
  \eqref{eq:csc-scc_bounds} to include the effect of the precision loss: For
  floating-point numbers with a base 2 and a precision $p$\footnote{$p$ is
    precision in bits; for example, $p{=}24$ for the 32-bit single-precision
    floating-point format (i.e., binary32) defined in IEEE
    754-2008~\cite{IEEE:754-2008}.},
  \begin{equation}
    \label{eq:csc-scc_optimal_bounds}
    \left\lfloor{\frac{1{-}u{+}2u^{2}}{(1{+}u)^{2}(1{+}2u)}t}\right\rfloor \leq j \leq
    \left\lceil{\frac{(1+2u)^{3}(1{+}u{-}2u^{2})}{(1{+}u)^{2}}t}\right\rceil,
  \end{equation}
  where $t{=}i\frac{D}{A}$ and $u{=}2^{-p}$.
	
  Let $k,{\ldots},k{+}l$ be the candidate values of $j$ satisfying
  \eqref{eq:csc-scc_optimal_bounds}. We determine $j$ by applying the extended
  Bresenham's algorithm of~\cite{Kim:22-1} with $\Delta{a}$ and $\Delta{b}$ set
  to $A$ and $D$ from the point $(i{-}l,k)$ and on; $j$ is determined by the $y$
  coordinate of the valid point whose $x$ coordinate is $i$.
	
  \smallskip%
  \noindent
  \textit{Case 2.} $\frac{D}{A}{>}1$: In this case, we can decompose the
  skew-compensated clock $j$ into two components as follows:
  \begin{equation}
    \label{eq:decomposition}
    j = i\frac{D}{A} = i + i\frac{D - A}{A}.
  \end{equation}
  Now that $\frac{D{-}A}{A}{<}1$, we can apply the same procedure of Case~1 to
  the second component in \eqref{eq:decomposition} by setting $\Delta{a}$ and
  $\Delta{b}$ of the extended Bresenham's algorithm to $A$ and $D{-}A$,
  respectively. Let $\bar{j}$ be the result from the procedure. The
  skew-compensated clock $j$ is given by $i{+}\bar{j}$ as per
  \eqref{eq:decomposition}.
\end{theorem}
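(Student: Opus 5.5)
The plan is to treat Theorem~\ref{thm:csc_with_optimal_bounds} as a correctness statement. I would verify three things in turn. First, the exact bounds \eqref{eq:csc-scc_bounds} in Case~1. Second, that the widened bounds \eqref{eq:csc-scc_optimal_bounds} still contain the correct value when $t=i\frac{D}{A}$ is only available as a floating-point result. Third, that running the extended Bresenham's algorithm of~\cite{Kim:22-1} from $(i{-}l,k)$ selects the right candidate.

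Case~2 I would reduce to Case~1 through the identity \eqref{eq:decomposition}. Since $i$ is an integer, $\lfloor i + x\rfloor = i + \lfloor x\rfloor$ and the same holds for the ceiling. Hence the Case~1 result for the slope $\frac{D-A}{A}<1$, with $\Delta a = A$ and $\Delta b = D-A$, shifts by $i$ to give the result for $\frac{D}{A}$.

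\textbf{Exact bounds.} The definition of the skew-compensated clock is the integer $y$-coordinate that the Bresenham line with $\Delta a=A$ and $\Delta b=D$ (slope below one) assigns at abscissa $i$. Bresenham's invariant says that the chosen $y$ stays within $\frac{1}{2}$ of the true line $y=x\frac{D}{A}$. Consequently $j\in\{\lfloor iD/A\rfloor,\lceil iD/A\rceil\}$, which is \eqref{eq:csc-scc_bounds}. When $iD/A$ is not an integer these are two distinct values.

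\textbf{Floating-point widening.} The computed value is $\hat t=\fp(\fp(i)\fp(D)/\fp(A))$, in whatever evaluation order the implementation uses. I would apply the standard model $\fp(x\circ y)=(x\circ y)(1+\delta)$ with $|\delta|\le u$, together with the sharper relative-error bounds for division and for conversion from~\cite{jeannerod18}, as done in~\cite{Kang:24}. This brackets the exact $t$ between $\hat t$ multiplied by the two rational functions of $u$ appearing in \eqref{eq:csc-scc_optimal_bounds}.

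For the argument I would write the estimate the other way round, expressing the exact $t$ in terms of the computed value. Combining this bracket with the floor and ceiling monotonicity from the exact-bounds step shows that the true $j$ lies among the candidates $k,\ldots,k+l$.

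\textbf{Candidate selection.} The candidate window lies within $l$ of the true line, and the slope is less than one. So the Bresenham path through abscissa $i$ must pass through a valid point at abscissa $i-l$ whose ordinate is at least $k$. Starting the incremental error recursion at $(i-l,k)$ with the correctly initialised error term therefore reproduces the path, and after $l$ steps it yields $j$. For this step I would rely on the "valid point" characterization of~\cite{Kim:22-1}.

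\textbf{Main obstacle.} I expect the hardest part to be the precise derivation of the constants in \eqref{eq:csc-scc_optimal_bounds}. This requires composing the rounding errors of the integer-to-float conversions and of the division, and inverting them to bound $t$ in terms of $\hat t$. I would attempt this by taking the worst-case products $(1\pm u)$ and $(1\pm 2u)$ and then simplifying, and I would cite~\cite{Kang:24} for the optimality of the resulting constants rather than re-proving it.
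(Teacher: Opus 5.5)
The paper gives no proof of Theorem~\ref{thm:csc_with_optimal_bounds}; it only restates it from \cite{Kim:22-1,Kang:24}. Your plan therefore has to stand on its own, and two of its steps do not.

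\textbf{The floating-point bracket is the wrong way round.} Write $c_1=\frac{1-u+2u^{2}}{(1+u)^{2}(1+2u)}$ and $c_2=\frac{(1+2u)^{3}(1+u-2u^{2})}{(1+u)^{2}}$ for the constants in \eqref{eq:csc-scc_optimal_bounds}. They are exactly the extreme values of $\hat t/t$ under the following composition of roundings:
the conversions of $i$ and $D$ and the multiplication each contribute a factor in $[\frac{1}{1+u},\frac{1+2u}{1+u}]$ (the optimal bound $\frac{u}{1+u}$);
the division contributes a factor in $[1-u+2u^{2},1+u-2u^{2}]$;
the conversion of $A$ enters inverted.
That is why $c_2$ has $(1+u)^{2}$ rather than $(1+u)^{3}$ in its denominator, and why $c_1$ has a lone factor $1+2u$ there. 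So the error model gives $c_1t\le\hat t\le c_2t$, not $c_1\hat t\le t\le c_2\hat t$.

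Inverting it, as you plan, yields $\hat t/c_2\le t\le\hat t/c_1$. Since $c_1c_2=1-3u^{2}+O(u^{3})<1$, only the lower half survives. The half $t\le c_2\hat t$ that you need for the upper bound does not follow. The constants you would actually derive are the reciprocals, not those stated, and the crude factors $1\pm u$, $1\pm 2u$ would not reproduce the stated ones either. Read literally, with $t=i\frac{D}{A}$ exact, \eqref{eq:csc-scc_optimal_bounds} is immediate from \eqref{eq:csc-scc_bounds} because $c_1<1<c_2$. What the error analysis adds is that the same window also contains every computed value $\hat t$. Your ``main obstacle'' is therefore misplaced.

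\textbf{The candidate-selection step rests on a false claim and misses the actual mechanism.} Take $p=24$, $D=99$, $A=100$, $i=1000$. Then $t=j=990$ and the window \eqref{eq:csc-scc_optimal_bounds} is $\{989,990,991\}$, so $k=989$ and $l=2$. Yet the valid point at $x=998$ is $(998,988)$, below $k$. The start $(998,989)$ is itself invalid, since $|989\cdot 100-998\cdot 99|=98>\frac{A}{2}$. In general $(i-l,k)$ can lie on either side of the path, so nothing is being ``reproduced''.

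What you need is a two-sided convergence argument, with $s=\frac{D}{A}<1$ and validity meaning $|y-sx|\le\frac12$.
From a point above the band, the decision rule keeps $y$, so the excess $y-sx$ falls by $s$ per step.
From a point below, it increments $y$, so the deficit falls by $1-s$ per step.
Each step changes the offset by less than the band's width $1$, so the trajectory cannot jump across the band, and Bresenham's invariant keeps a valid point valid.
Now suppose the trajectory from $(i-l,k)$ stayed above for all $l$ steps; its excess at $x=i$ would be $k-si\le j-si\le\frac12$. If it stayed below, its deficit would be $si-(k+l)\le si-j\le\frac12$. Both contradict invalidity, so the point at $x=i$ is valid and equals $j$. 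This is precisely why the start is placed at abscissa $i-l$ and ordinate $k$.

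Finally, your Case~2 reduction silently uses $\frac{D-A}{A}<1$, i.e.\ $D<2A$. That holds for clock skews but not for every $\frac{D}{A}>1$.
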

\noindent%
Note that practical bounds loosening the theoretical bounds in
\eqref{eq:csc-scc_optimal_bounds} are also provided for practical implementation
at resource-constrained sensor nodes with limited floating-point precision
in~\cite{Kang:24}.

As illustrated in Fig.~\ref{fig:csc-ext-bresenham}~(b), the range of the initial
value of $j$, which are bounded in \eqref{eq:csc-scc_optimal_bounds}, determines
the number of iterations of the algorithm (i.e., the difference between the
upper and lower bounds).
\begin{figure}[!tb]
  \begin{center}
    \includegraphics[angle=-90,width=.9\linewidth]{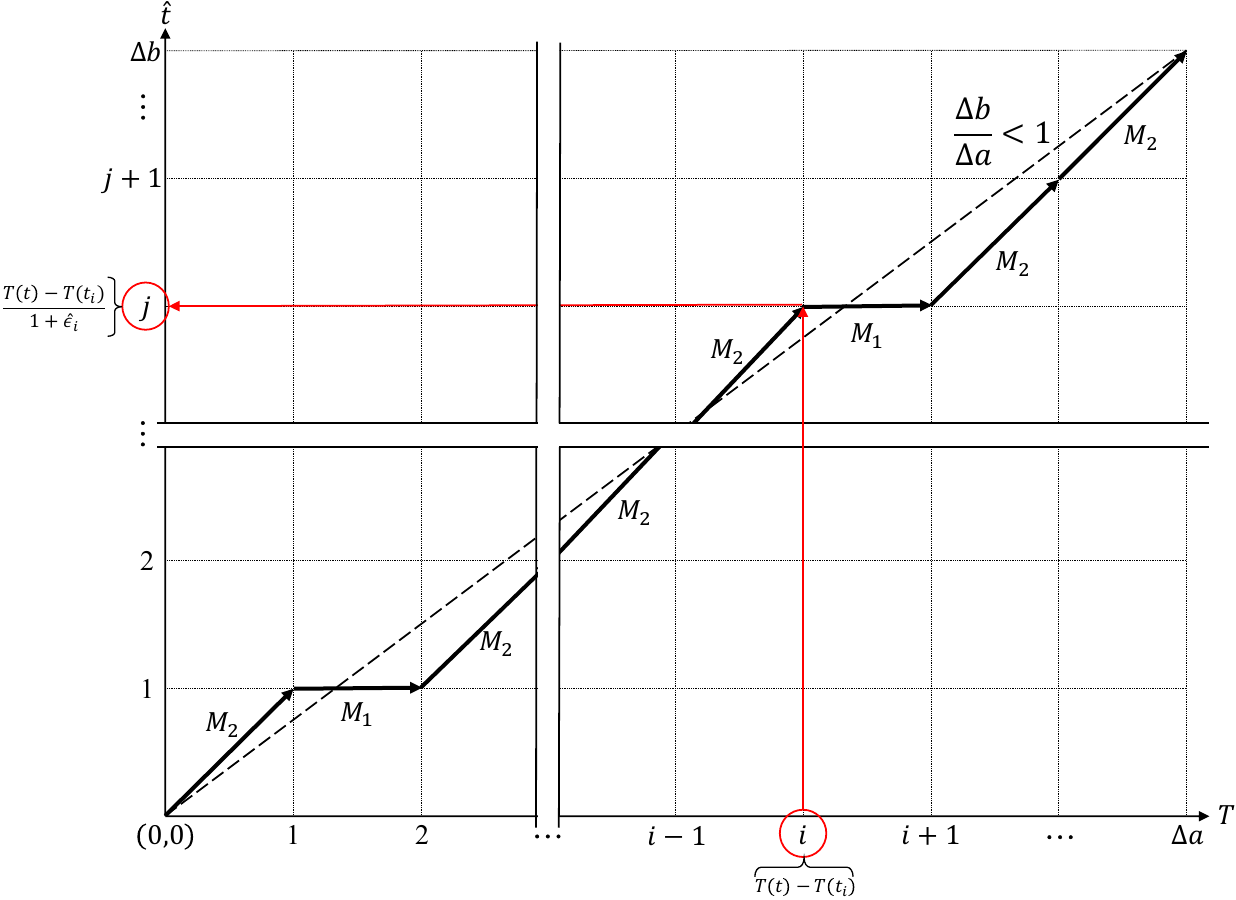}\\
    {\scriptsize (a)}\\
    \includegraphics[angle=-90,width=.95\linewidth]{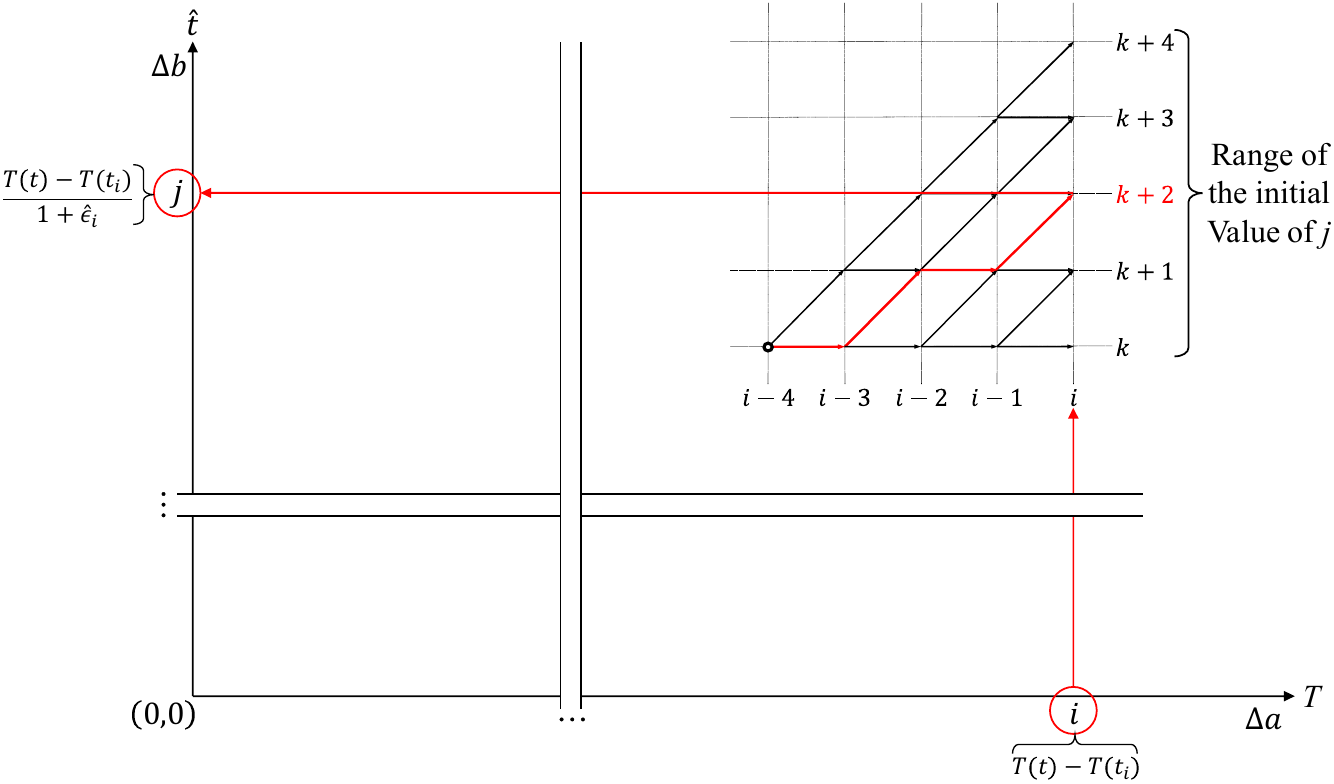}\\
    {\scriptsize (b)}
  \end{center}
  \caption{Clock skew compensation based on (a) the
    original~\cite{bresenham65:_algor} and (b) the extended Bresenham's
    algorithm~\cite{Kim:22-1,Kang:24} for the case of $\frac{D}{A}{<}1$.}
  \label{fig:csc-ext-bresenham}
\end{figure}
Though the clock skew compensation based on the extended Bresenham's algorithm
can greatly reduce the number of iterations compared to that based on the
original algorithm as shown in Fig.~\ref{fig:csc-ext-bresenham}, the numerical
examples presented in~\cite{Kang:24} demonstrate that the theoretical and
practical bounds based on the systematic analyses of floating-point operation
errors are looser than the approximate bounds of~\cite{Kim:22-1} and still
increase the number of iterations, which is the motivation for the current work
on non-incremental error algorithms not relying on the bounds on the initial
value of the clock.

\section{Integer Linear Scaling Rounded to the Nearest Integer}
\label{sec:ils-rttn}
Without loss of generality, we confine our discussions to the special case of
integer linear scaling with two non-negative integers $i$ and $D$ and a positive
integer $A$; the nearest integer solution to a general case can be obtained by
applying the sign of an original problem to the solution to its non-negative
version. For convenience, we also assume $i{\geq}D$ unless stated otherwise.

\subsection{Preliminaries}
\label{sec:preliminaries}
Let $\mathbb{N}_{0}$ and $\mathbb{N}_{+}$ denote the set of non-negative
integers and the set of positive integers, respectively. The problem of the
clock skew compensation can be generalized to the problem of integer linear
scaling rounded to the nearest integer as follows:
\begin{definition}
  \label{def:ils-rttn}
  Given $i,D{\in}\mathbb{N}_{0}$ and $A{\in}\mathbb{N}_{+}$, we define the
  nearest integer solution $j$ to $i\frac{D}{A}$ as follows:
  \begin{equation}
    \label{eq:ils-rttn-1}
    j \triangleq \argmin_{k \in \mathbb{N}_{0}}\left|k-i\frac{D}{A}\right|.
  \end{equation}
\end{definition}

Due to the lack of tie-breaking, Definition~\ref{def:ils-rttn} allows up to two
nearest integer solutions (e.g., 7 or 8 for
$3\frac{5}{2}{=}7.5$). Lemma~\ref{lem:ils-rttn} provides an equivalent
condition for the nearest integer solution without division.
\begin{lemma}
  \label{lem:ils-rttn}
  Given $i,D{\in}\mathbb{N}_{0}$ and $A{\in}\mathbb{N}_{+}$, $j$ is the nearest
  integer solution to $i\frac{D}{A}$ if and only if $j$ satisfies the following
  condition:
  \begin{equation}
    \label{eq:ils-rttn-2}
    \left|jA-iD\right| \leq \left|kA-iD\right| ~~~ \forall k \in \mathbb{N}_{0}.
  \end{equation}
\end{lemma}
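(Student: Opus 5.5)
The plan is to rewrite Definition~\ref{def:ils-rttn} in an equivalent form: a point of $\mathbb{N}_{0}$ is a minimizer of the objective $k\mapsto|k-i\frac{D}{A}|$. The argmin in \eqref{eq:ils-rttn-1} is interpreted as the set of minimizers; the paper explicitly allows up to two. So $j$ is a nearest integer solution precisely when $j\in\mathbb{N}_{0}$ and
\begin{equation*}
  \left|j-i\frac{D}{A}\right| \leq \left|k-i\frac{D}{A}\right| ~~~ \forall k\in\mathbb{N}_{0}.
\end{equation*}
The whole proof then reduces to showing that this family of inequalities is equivalent, term by term, to the family in \eqref{eq:ils-rttn-2}.

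The key step is scaling by the positive integer $A$. For any $k\in\mathbb{N}_{0}$ we have
\begin{equation*}
  \left|kA-iD\right| = \left|A\left(k-i\frac{D}{A}\right)\right| = A\left|k-i\frac{D}{A}\right|,
\end{equation*}
since $A>0$ gives $|A|=A$. Because $A>0$, multiplying both sides of an inequality between non-negative reals by $A$ preserves it, and so does dividing by $A$. Hence, for each fixed $k$,
\begin{equation*}
  \left|j-i\frac{D}{A}\right|\leq\left|k-i\frac{D}{A}\right| \iff \left|jA-iD\right|\leq\left|kA-iD\right|.
\end{equation*}
Taking the conjunction over all $k\in\mathbb{N}_{0}$ yields both directions of the lemma at once.

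I do not expect a real obstacle. The only point requiring care is the set-valued reading of $\argmin$ when there are ties. I would state explicitly that ``$j$ is the nearest integer solution'' means ``$j$ belongs to the set of minimizers'', so the equivalence holds for each of the (possibly two) solutions. For completeness, one could also remark that the minimizer set is nonempty, since $|k-i\frac{D}{A}|\to\infty$ as $k\to\infty$, although the lemma itself does not require this.
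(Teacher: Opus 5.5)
Your proposal is correct and follows the same route as the paper: it rewrites the definition as $\left|j-i\frac{D}{A}\right|\leq\left|k-i\frac{D}{A}\right|$ for all $k\in\mathbb{N}_{0}$ and passes to and from \eqref{eq:ils-rttn-2} by multiplying or dividing by $A>0$. Your explicit set-valued reading of $\argmin$ and the use of $|A|=A$ just make precise what the paper leaves implicit.
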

\begin{IEEEproof}
  If $j$ is the nearest integer solution to $i\frac{D}{A}$, $j$ satisfies the
  following condition by Definition~\ref{def:ils-rttn}:
  \begin{equation}
    \label{eq:ils-rttn-3}
    \left|j-i\frac{D}{A}\right| \leq \left|k-i\frac{D}{A}\right| ~~~ \forall k \in \mathbb{N}_{0}.
  \end{equation}
  Multiplying both sides of \eqref{eq:ils-rttn-3} by $A$, we obtain
  \eqref{eq:ils-rttn-2}.
  
  If $j$ satisfies \eqref{eq:ils-rttn-2}, we obtain \eqref{eq:ils-rttn-3} by
  dividing both sides of \eqref{eq:ils-rttn-2} by $A$, which means that $j$ is
  the nearest integer solution to $i\frac{D}{A}$ according to
  Definition~\ref{def:ils-rttn}.
\end{IEEEproof}
\vspace{\myvspace}%

Lemma~\ref{lem:ils-rttn-fp} provides the nearest integer solution based on the
exact value of $i\frac{D}{A}$.
\begin{lemma}
  \label{lem:ils-rttn-fp}
  Given $i,D{\in}\mathbb{N}_{0}$ and $A{\in}\mathbb{N}_{+}$, we can obtain the
  nearest integer solution $j$ to $i\frac{D}{A}$ as follows:
  \begin{equation}
    \label{eq:ils-rttn-fp}
    j = \left\lfloor i \frac{D}{A} + 0.5 \right\rfloor.
  \end{equation}
\end{lemma}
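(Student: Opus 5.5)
Write $t \triangleq i\frac{D}{A}$, which is a non-negative real number since $i,D\in\mathbb{N}_{0}$ and $A\in\mathbb{N}_{+}$. Let $j=\lfloor t+0.5\rfloor$. The plan is to verify directly that $j$ satisfies Definition~\ref{def:ils-rttn}, i.e., that $|j-t|\leq|k-t|$ for all $k\in\mathbb{N}_{0}$. The route is to bound $|j-t|$ by $\frac{1}{2}$ and then to show that no other integer can be strictly closer to $t$.

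First, $j\in\mathbb{N}_{0}$: because $t\geq 0$ we have $t+0.5\geq 0.5$, hence $j\geq 0$. Next, the defining property of the floor gives $j\leq t+0.5<j+1$, which rearranges to
\begin{equation*}
  -\tfrac{1}{2} < j - t \leq \tfrac{1}{2}.
\end{equation*}
So $|j-t|\leq\frac{1}{2}$.

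Now take any $k\in\mathbb{N}_{0}$ with $k\neq j$. There are two cases.
\begin{itemize}
\item If $k\geq j+1$, then $k-t\geq (j-t)+1>\frac{1}{2}$, so $|k-t|>\frac{1}{2}\geq|j-t|$.
\item If $k\leq j-1$, then $t-k\geq (t-j)+1\geq\frac{1}{2}$, so $|k-t|\geq\frac{1}{2}\geq|j-t|$.
\end{itemize}
In both cases $|j-t|\leq|k-t|$. Hence $j$ attains the minimum in \eqref{eq:ils-rttn-1} and is a nearest integer solution. Equivalently, by Lemma~\ref{lem:ils-rttn}, one could multiply through by $A$ and check \eqref{eq:ils-rttn-2} instead.

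The only delicate point is the tie case $t-j=\frac{1}{2}$, which occurs when $t$ is a half-integer. There $k=j-1$ is equally close, since $|k-t|=\frac{1}{2}=|j-t|$. This is harmless because Definition~\ref{def:ils-rttn} has no tie-breaking and allows either of the two minimizers. So the claim is that $j$ is \emph{a} nearest integer solution, in this case the one obtained by rounding half up. This is exactly why the second inequality above is non-strict while the first is strict.
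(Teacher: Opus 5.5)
Your proof is correct and is essentially the paper's argument. Both show that $|j-t|\leq\frac{1}{2}$ and that every other integer lies at distance at least $\frac{1}{2}$; the paper gets there by splitting on the fractional part $b$ of $i\frac{D}{A}$ ($j=a$ when $b<0.5$, $j=a+1$ when $b\geq 0.5$) inside a contradiction argument, while you take $-\frac{1}{2}<j-t\leq\frac{1}{2}$ directly from the floor inequality, split on $k>j$ versus $k<j$, and also check $j\in\mathbb{N}_{0}$ explicitly. One small slip in your closing remark: the tie case is $j-t=\frac{1}{2}$, not $t-j=\frac{1}{2}$, since your own strict inequality rules out the latter; with $j-t=\frac{1}{2}$, $k=j-1$ is indeed equally close.
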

\begin{IEEEproof}
  See Appendix~\ref{sec:lem-ils-rttn-fp-proof}.
\end{IEEEproof}
\vspace{\myvspace}%

Unlike Definition~\ref{def:ils-rttn}, as Lemma~\ref{lem:ils-rttn-fp}
incorporates the \textit{round half up} tie-breaking rule, it results in a
unique solution. A major issue with Lemma~\ref{lem:ils-rttn-fp} is that we
cannot obtain the exact value of $i\frac{D}{A}$ on a real platform but only an
approximate value of it due to floating-point precision loss, i.e.,
${\fp}(i\frac{D}{A})$ where $\fp(\cdot)$ denotes the result of the
floating-point operations on the given platform.

From Lemmas~\ref{lem:ils-rttn} and \ref{lem:ils-rttn-fp}, we can derive a bound
on the lhs of \eqref{eq:ils-rttn-2}:
\begin{equation}
  \label{eq:td-bound}
  \left|jA-iD\right| \leq \frac{A}{2}.
\end{equation}

\subsection{Algorithms Immune to Floating-Point Precision Loss}
\label{sec:ils-rttn-algos}
We present efficient algorithms for the integer linear scaling rounded to the
nearest integer, which are immune to floating-point precision loss and reduce
overflows in obtaining the nearest integer solution based on a fixed-width
integer type through multiplicative and additive decomposition techniques.

\subsubsection{Multiplicative Decomposition of Integer Division}
\label{sec:mdid}
Theorem~\ref{thm:mdid} presents an algorithm based on multiplicative
decomposition of integer division using the fixed-point
representation~\cite{dsl:fixed-point-arithmetic}, not relying on floating-point
arithmetic and thereby immune to floating-point precision loss.
\begin{theorem}[Multiplicative decomposition of integer division]
  \label{thm:mdid}
  Given $i,D{\in}\mathbb{N}_{0}$ and $A{\in}\mathbb{N}_{+}$, we can obtain the
  nearest integer solution $j$ to $i\frac{D}{A}$ as follows:
  \begin{equation}
    \label{eq:mdid}
    j = \left\lfloor \frac{i}{A} \right\rfloor D +
    \left\lfloor
      \frac{\left(i \bmod A\right)D + \left\lfloor\frac{A}{2}\right\rfloor}{A}
    \right\rfloor.
  \end{equation}
\end{theorem}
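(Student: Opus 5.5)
We reduce \eqref{eq:mdid} to Lemma~\ref{lem:ils-rttn-fp}, which gives $j=\lfloor iD/A+0.5\rfloor$, and then remove the floating-point quantities by exact integer manipulations.

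First, apply the division algorithm to $i$ and $A$. Write $i=qA+r$ with $q=\lfloor i/A\rfloor$ and $r=i\bmod A$, so that $0\le r<A$. Then
\begin{equation*}
  i\frac{D}{A}=qD+\frac{rD}{A}.
\end{equation*}
Since $qD$ is an integer, it can be pulled out of the floor. Hence
\begin{equation*}
  j=\left\lfloor i\frac{D}{A}+0.5\right\rfloor = qD+\left\lfloor\frac{rD}{A}+\frac12\right\rfloor.
\end{equation*}
This already gives the first term of \eqref{eq:mdid}. It remains to show
\begin{equation*}
  \left\lfloor \frac{2rD+A}{2A}\right\rfloor=\left\lfloor\frac{rD+\lfloor A/2\rfloor}{A}\right\rfloor.
\end{equation*}

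Second, we prove this identity by cases on the parity of $A$. Set $N=rD$ and use the general fact $\lfloor x/m\rfloor=\lfloor\lfloor x\rfloor/m\rfloor$ for positive integers $m$.

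\textit{Case $A$ even.} Here $\lfloor A/2\rfloor=A/2$. Then
\begin{equation*}
  \frac{N}{A}+\frac12=\frac{N+A/2}{A},
\end{equation*}
and the two sides coincide immediately.

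\textit{Case $A$ odd.} Write $A=2m+1$, so $\lfloor A/2\rfloor=m$ and
\begin{equation*}
  \frac{N}{A}+\frac12=\frac{N+m+\frac12}{A}.
\end{equation*}
Applying the general fact with $x=N+m+\tfrac12$ and $\lfloor x\rfloor=N+m$ gives
\begin{equation*}
  \left\lfloor\frac{N+m+\frac12}{A}\right\rfloor=\left\lfloor\frac{N+m}{A}\right\rfloor.
\end{equation*}

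If one prefers not to cite that fact, argue directly. Let $N+m=sA+t$ with $0\le t\le A-1$. Then $t+\tfrac12<A$, so adding $\tfrac12$ cannot reach the next multiple of $A$, and the quotient stays $s$.

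Third, combine the two steps. This yields \eqref{eq:mdid}, which by Lemma~\ref{lem:ils-rttn-fp} is a nearest integer solution in the sense of Definition~\ref{def:ils-rttn}.

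The only step needing care is the odd-$A$ case, where the half-integer offset must be shown not to cross a multiple of $A$. Everything else is routine bookkeeping with the division algorithm. The argument holds in exact integer arithmetic, and overflow is not at issue in this statement.
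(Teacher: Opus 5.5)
Your proof is correct and follows essentially the same route as the paper: write $i=qA+r$ by the division algorithm, apply Lemma~\ref{lem:ils-rttn-fp}, pull the integer $qD$ out of the floor, and then replace $\frac{A}{2}$ by $\lfloor\frac{A}{2}\rfloor$ in the numerator. The paper does that last step in one line by citing the floor identity $\lfloor (x+m)/n\rfloor=\lfloor(\lfloor x\rfloor+m)/n\rfloor$ (integer $m$, positive integer $n$), and your parity split, with the direct argument for odd $A$, just proves that identity for the case needed here.
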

\begin{IEEEproof}
  See Appendix~\ref{sec:thm-mdid-proof}.
\end{IEEEproof}
\vspace{\myvspace}%

Unlike the approach based on normal integer division (i.e., $\frac{iD}{A}$),
Theorem~\ref{thm:mdid} enables us to obtain the nearest integer solution to
$i\frac{D}{A}$ without calculating the product $iD$ for the dividend, which
could overflow when the result cannot be represented by a fixed-width integer
type.
If both $i$ and $D$ are smaller than $A$, however,
\eqref{eq:mdid} reduces back to normal integer division rounded to the nearest
integer, i.e.,
\begin{equation}
  \label{eq:fd}
  j = 
  \left\lfloor
    \frac{iD + \left\lfloor\frac{A}{2}\right\rfloor}{A}
  \right\rfloor.
\end{equation}

\subsubsection{Additive Decomposition of Direct Search}
\label{sec:adds}
Noting that finding the nearest integer solution to $i\frac{D}{A}$ is an integer
programming problem whose objective function is $f(x){=}|x{-}i\frac{D}{A}|$
(i.e., \eqref{eq:ils-rttn-1} of Definition~\ref{def:ils-rttn}), we provide an
alternative algorithm directly searching the minimum in Theorem~\ref{thm:ds}.
\begin{theorem}[Direct search]
  \label{thm:ds}
  Given $i,D{\in}\mathbb{N}_{0}$ and $A{\in}\mathbb{N}_{+}$,
  we can obtain $j$, the nearest integer solution to $i\frac{D}{A}$, and
  $\Delta$, the value of $jA{-}iD$, using Algorithm~\ref{alg:ds}\footnote{The
    internal variables of $k_{0}$, $k_{1}$, $\Delta_{0}$, and $\Delta_{1}$ are
    subscripted to differentiate their values between the updates during the
    proof.} as follows:
  \begin{equation}
    \label{eq:ds}
    (j, \Delta) = ds(i, D, A, \kappa, 0),
  \end{equation}
  where $\kappa{\in}\mathbb{N}_{0}$ is an initial guess of $j$.
\end{theorem}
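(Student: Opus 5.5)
The plan is to treat Algorithm~\ref{alg:ds} as a search loop driven by an error term, and to verify three things about it: an invariant, termination, and optimality of the returned value through Lemma~\ref{lem:ils-rttn}. I assume, as the footnote suggests, that the algorithm first sets $k_{0}=\kappa$ and computes $\Delta_{0}=k_{0}A-iD$; the last argument $0$ appears to seed this accumulated error. It then repeatedly moves $k$ by $\pm 1$, updating $\Delta$ by $\pm A$, until $\Delta$ lies in a half-open window of width $A$ around zero, presumably $-\frac{A}{2}<\Delta\leq\frac{A}{2}$ in integer form. That window matches the round-half-up rule of Lemma~\ref{lem:ils-rttn-fp}.

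\emph{Invariant.} I would first show by induction over the iterations that after every update $\Delta_{1}=k_{1}A-iD$. This holds because each step changes $k$ by $\pm1$ and $\Delta$ by the matching $\pm A$. The returned pair therefore satisfies $\Delta=jA-iD$, which gives the second half of the statement immediately.

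\emph{Termination.} While $\Delta>\frac{A}{2}$ the algorithm decrements $k$, and while $\Delta<-\frac{A}{2}$ (or $\leq$, as coded) it increments $k$. In either case $|\Delta|$ strictly decreases by $A$ until $\Delta$ enters the window, so the loop stops after exactly $\bigl\lceil\,\bigl||\kappa A-iD|-\frac{A}{2}\bigr|/A\,\bigr\rceil$-type many steps. The window has width $A$, so the loop cannot overshoot it and oscillate.

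\emph{Non-negativity.} I would also check that $j$ never goes negative even when $\kappa$ is a poor guess. Decrementing happens only when $kA-iD>\frac{A}{2}\geq 0$, which forces $kA>iD\geq0$ and hence $k\geq1$. So $k-1\geq0$ after the step, and the iterates stay in $\mathbb{N}_{0}$, as Definition~\ref{def:ils-rttn} requires.

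\emph{Optimality.} At exit we have $|jA-iD|\leq\frac{A}{2}$. For any $k\neq j$ in $\mathbb{N}_{0}$,
\[
|kA-iD|\geq|k-j|A-|jA-iD|\geq A-\frac{A}{2}=\frac{A}{2}\geq|jA-iD|.
\]
By Lemma~\ref{lem:ils-rttn}, $j$ is therefore a nearest integer solution. The asymmetric window additionally makes $j$ coincide with $\lfloor i\frac{D}{A}+0.5\rfloor$ from Lemma~\ref{lem:ils-rttn-fp}. This gives uniqueness, which is consistent with \eqref{eq:td-bound}.

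The main obstacle I expect is matching the exact integer comparisons in Algorithm~\ref{alg:ds} with the half-integer threshold $\frac{A}{2}$ when $A$ is odd, and handling ties. I would resolve this by writing the exit test as $-A<2\Delta\leq A$, which is equivalent and division-free, and checking each branch of the algorithm against that form.
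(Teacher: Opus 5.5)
\paragraph{Gap: the proposal analyzes a loop, but Algorithm~\ref{alg:ds} has no loop.} Your argument assumes the algorithm steps $k$ by $\pm1$ until $\Delta$ enters a window. Algorithm~\ref{alg:ds} does not work that way. With $\Delta_{\!-}=0$, Line~2 gives $\Delta_{0}=k_{0}A-iD$. The algorithm then jumps in one step to $k_{1}=k_{0}-\lfloor\Delta_{0}/A\rfloor$, with $\Delta_{1}=\Delta_{0}\bmod A$ (Case~2) or $\Delta_{1}=-(|\Delta_{0}|\bmod A)$ (Case~3). A single $\pm1$ correction follows, decided by comparing $|\Delta_{1}\mp A|$ with $|\Delta_{1}|$.

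Your invariant-by-induction and termination arguments therefore do not apply to the statement. What is missing is the branch-by-branch check that the returned pair satisfies $\Delta=jA-iD$ and $|\Delta|\le\frac{A}{2}$. The paper gets this from the division algorithm. Writing $|\Delta_{0}|=Aq+r$ with $0\le r<A$ gives $k_{1}=k_{0}\mp q$ and $\Delta_{1}=\pm r=k_{1}A-iD$. Since $|r-A|<r$ iff $r>\frac{A}{2}$, the correction always lands in $|\Delta|\le\frac{A}{2}$.

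In Case~3 this identity needs $\lfloor\Delta_{0}/A\rfloor$ to mean truncation toward zero (C integer division, value $-q$), which is how the paper's proof implicitly reads it. With the mathematical floor, $k_{1}$ is one too large whenever $r>0$, and the output is wrong: for $A=10$, $iD=27$, $\kappa=0$ it would return $j=4$.

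Once this branch analysis is supplied, two of your steps carry over. Your closing inequality $|kA-iD|\ge|k-j|A-|jA-iD|\ge\frac{A}{2}\ge|jA-iD|$ is a cleaner justification than the paper's informal ``compare the two neighbours'' step. Your non-negativity check also adapts directly. In Case~2, $k_{1}A=iD+r\ge0$, and in Case~2.1, $r>0$ forces $k_{1}\ge1$. In Case~3, $k_{1}=\kappa+q\ge0$.

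\paragraph{The window and uniqueness claims are false for Algorithm~\ref{alg:ds}.} Your window $-\frac{A}{2}<\Delta\le\frac{A}{2}$ and the claimed coincidence with $\lfloor i\frac{D}{A}+0.5\rfloor$ do not hold. The tests in Cases~2.1 and 3.1 are strict, so a tie ($r=\frac{A}{2}$) gives $\Delta=+\frac{A}{2}$ in Case~2.2 but $\Delta=-\frac{A}{2}$ in Case~3.2. The rounding direction thus depends on which side of $i\frac{D}{A}$ the guess $\kappa$ lies. For $A=2$ and $iD=3$, $\kappa=0$ returns $j=1$, while $\kappa=5$ returns $j=2$.

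Both outputs are nearest integer solutions in the sense of Definition~\ref{def:ils-rttn}, so the theorem still holds. However, you must check the branches against the symmetric condition $|2\Delta|\le A$ rather than $-A<2\Delta\le A$, and drop the uniqueness claim.
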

\begin{algorithm}[!tbh]
  \caption{Inter linear scaling to the nearest integer by direct search.}
  \label{alg:ds}
  \begin{algorithmic}[1]
    \Function{ds}{$i$, $D$, $A$, $k_{0}$, $\Delta_{\!-}$}
      \State $\Delta_{0} \Leftarrow (k_{0} - i)A + i(A - D) + \Delta_{\!-}$%
      \If{$\Delta_{0} = 0$} \Comment{\textit{Case}~1}%
        \State $j \Leftarrow k_{0}$%
      \ElsIf{$\Delta_{0} > 0$} \Comment{\textit{Case}~2}%
        \State $k_{1} \Leftarrow k_{0} - \left\lfloor\dfrac{\Delta_{0}}{A}\right\rfloor$%
        \vspace{0.4ex}%
        \State $\Delta_{1} \Leftarrow \Delta_{0} \bmod A$%
        \If{$|\Delta_{1} - A| < |\Delta_{1}|$} \Comment{\textit{Case}~2.1}%
          \State $j \Leftarrow k_{1} - 1$%
          \State $\Delta \Leftarrow \Delta_{1} - A$%
        \Else \Comment{\textit{Case}~2.2}%
          \State $j \Leftarrow k_{1}$%
          \State $\Delta \Leftarrow \Delta_{1}$%
        \EndIf%
      \Else \Comment{\textit{Case}~3}%
        \State $k_{1} \Leftarrow k_{0} - \left\lfloor\dfrac{\Delta_{0}}{A}\right\rfloor$%
        \vspace{0.35ex}%
        \State $\Delta_{1} \Leftarrow - \left(|\Delta_{0}| \bmod A\right)$%
        \If{$\left|\Delta_{1} + A\right| < |\Delta_{1}|$} \Comment{\textit{Case}~3.1}%
          \State $j \Leftarrow k_{1} + 1$%
          \State $\Delta \Leftarrow \Delta_{1} + A$%
        \Else \Comment{\textit{Case}~3.2}%
          \State $j \Leftarrow k_{1}$%
          \State $\Delta \Leftarrow \Delta_{1}$%
        \EndIf%
      \EndIf%
      \State \Return $j$, $\Delta$%
    \EndFunction%
  \end{algorithmic}
\end{algorithm}
\begin{IEEEproof}
  See Appendix~\ref{sec:thm-ds-proof}.
\end{IEEEproof}
\vspace{\myvspace}%

Though $j$ from Algorithm~\ref{alg:ds} does not depend on the value of $k_{0}$
(i.e., $\kappa$ in \eqref{eq:ds}), it can be set to
$\lfloor\fp(i\frac{D}{A}){+}0.5\rfloor$ if floating-point arithmetic is
available; otherwise, it can be set to ${\lfloor}\frac{i}{A}{\rfloor}D$,
${\lfloor}\frac{D}{A}{\rfloor}i$, or just $i$ depending on the values of $i$,
$D$, and $A$. Also, the initialization of $\Delta$ in Line~2 is to avoid
overflow during the multiplication of $kA$ or $iD$ because
\begin{equation*}
  \Delta_{0} = k_{0}A - iD = (k_{0}-i)A + i(A-D),
\end{equation*}
especially when $\frac{D}{A}{\approx}1$, which is the case for clock skew
compensation.




When the value of $i$ is very large, there could be a chance of an overflow
during the initialization of $\Delta$. To avoid the overflow, we can divide the
given problem with $i$ into a series of sub-problems with smaller values of
$i_{n}$ ($n{=}1,{\ldots},N$), where $\sum_{i=1}^{N}i_{n}{=}i$. The main
challenge is that the rounding error from a subproblem with $i_{n}$ needs to be
properly taken into account in the next subproblem with $i_{n+1}$ so that there
should be no cumulation of rounding errors in the end.

When we decompose a problem of integer linear scaling to the nearest integer by
direct search into two subproblems as shown in Fig.~\ref{fig:csc-decomp}, the
second subproblem starts from $(0, \delta_{1})$, not the origin, due to the
rounding error of the first subproblem.
\begin{figure*}[!tb]
  \begin{center}
    \includegraphics[width=\linewidth]{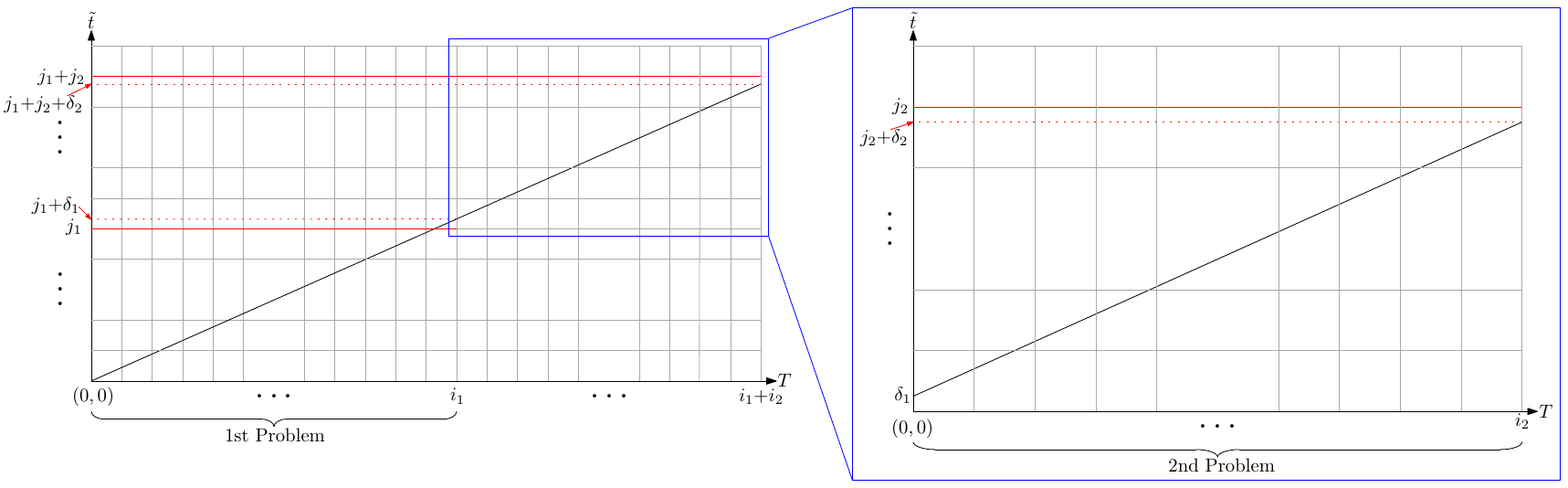}
  \end{center}
  \caption{Decomposition of integer linear scaling to the nearest integer by
    direct search into two subproblems.}
  \label{fig:csc-decomp}
\end{figure*}
$\Delta$ resulting from Algorithm~\ref{alg:ds}, which indicates how close $j$ is
to $i\frac{D}{A}$ without division, can be used to carry out the information on
the rounding error from one subproblem to another as stated in
Lemma~\ref{lem:ds-dcmp}.
\begin{lemma}
  \label{lem:ds-dcmp}
  Given $i_{1},i_{2},D{\in}\mathbb{N}_{0}$ and $A{\in}\mathbb{N}_{+}$, we can
  obtain $j$, the nearest integer solution to $i\frac{D}{A}$, where
  $i{=}i_{1}{+}i_{2}$, and $\Delta$, the value of $jA{-}iD$, using
  Algorithm~\ref{alg:ds} as follows:
  \begin{equation}
    \label{eq:lem-ds-dcmp-1}
    \begin{split}
      j & = j_{1} + j_{2}, \\
      \Delta & = \tilde{\Delta}_{2},
    \end{split}
  \end{equation}
  where
  \begin{equation}
    \label{eq:lem-ds-dcmp-2}
    \begin{split}
      (j_{1}, \tilde{\Delta}_{1}) & = ds(i_{1}, D, A, \kappa_{1}, 0), \\
      (j_{2}, \tilde{\Delta}_{2}) & = ds(i_{2}, D, A, \kappa_{2}, \tilde{\Delta}_{1}),
    \end{split}
  \end{equation}
  and $\kappa_{n}{\in}\mathbb{N}_{0}$ ($n{=}1,2$) is an initial guess of
  $j_{n}$.
\end{lemma}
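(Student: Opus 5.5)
The proof rests on a general property of Algorithm~\ref{alg:ds} that is implicit in the proof of Theorem~\ref{thm:ds}. For a carried-in value $\Delta_{-}$, the call $ds(i,D,A,k_{0},\Delta_{-})$ returns a pair $(j,\Delta)$ with $\Delta = jA - iD + \Delta_{-}$. Moreover, $j$ minimizes $|kA - iD + \Delta_{-}|$ over the integers $k$, so $|\Delta|\le A/2$. The argument has two steps: establish this generalized invariant, then compose it across the two subproblems.

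\textbf{Step 1 (generalized invariant).} Line~2 gives
\begin{equation*}
  \Delta_{0} = k_{0}A - iD + \Delta_{-},
\end{equation*}
so $\Delta_{0}$ is the objective $g(k) \triangleq kA - iD + \Delta_{-}$ evaluated at $k_{0}$. Since $g(k-m) = g(k) - mA$, each assignment in the algorithm preserves the relation $\Delta = g(j)$:
\begin{itemize}
\item $k_{1} = k_{0} - \lfloor \Delta_{0}/A \rfloor$ gives $g(k_{1}) = \Delta_{0} - \lfloor \Delta_{0}/A \rfloor A = \Delta_{0} \bmod A$ in Case~2.
\item In Case~3, $g(k_{1})$ equals the value assigned to $\Delta_{1}$.
\item The final $\pm 1$ adjustment of $j$ shifts $\Delta$ by $\mp A$, consistent with the same relation.
\end{itemize}
After Step~1, $\Delta_{1}$ lies in $(-A,A)$. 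The comparisons in Cases~2.1/2.2 and 3.1/3.2 then choose, among the two consecutive integers bracketing the real root $k^{*} = (iD - \Delta_{-})/A$ of $g$, the one minimizing $|g|$. This is the same argument as in Theorem~\ref{thm:ds}, since $\Delta_{-}$ only translates the target.

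In Case~3, I need to check that the assignment $\Delta_{1} = -(|\Delta_{0}| \bmod A)$ is consistent with $k_{1} = k_{0} - \lfloor \Delta_{0}/A \rfloor$. When $\Delta_{0}<0$ and $A \nmid \Delta_{0}$, we have $\lfloor \Delta_{0}/A \rfloor = -\lceil |\Delta_{0}|/A \rceil$. A direct computation then gives $g(k_{1}) = A - (|\Delta_{0}| \bmod A)$, not $-(|\Delta_{0}| \bmod A)$. I therefore expect Case~3 to be the main obstacle: it has to be handled exactly as in the paper's proof of Theorem~\ref{thm:ds}, reading the floor as truncation toward zero (C-style integer division). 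Under that reading $g(k_{1}) = \Delta_{1}$ holds, and the Case~3.1 correction by $+A$ is the matching adjustment. I will adopt the same convention as the proof of Theorem~\ref{thm:ds} so the invariant holds in every case.

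\textbf{Step 2 (composition).} By Theorem~\ref{thm:ds}, the first call gives
\begin{equation*}
  \tilde{\Delta}_{1} = j_{1}A - i_{1}D.
\end{equation*}
By Step~1 applied to the second call,
\begin{equation*}
  \tilde{\Delta}_{2} = j_{2}A - i_{2}D + \tilde{\Delta}_{1} = (j_{1}+j_{2})A - (i_{1}+i_{2})D = jA - iD,
\end{equation*}
which proves the claimed $\Delta$. For optimality, $j_{2}$ minimizes $|kA - i_{2}D + \tilde{\Delta}_{1}|$ over $k$. This equals $|(k+j_{1})A - iD|$, so $j_{1}+j_{2}$ minimizes $|mA - iD|$ over integers $m = k+j_{1}$. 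Hence $j$ satisfies \eqref{eq:ils-rttn-2} of Lemma~\ref{lem:ils-rttn} and is a nearest integer solution.

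\textbf{Remaining detail.} The minimization must range over $\mathbb{N}_{0}$ rather than all of $\mathbb{Z}$. Since $iD/A \ge 0$, the unconstrained minimizer over $\mathbb{Z}$ can be chosen nonnegative, so restricting to $\mathbb{N}_{0}$ changes nothing.
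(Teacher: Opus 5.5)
Your argument is correct, but it is organized differently from the paper's. The paper uses a simulation argument. Line~2 of $ds(i_{2},D,A,\kappa_{2},\tilde{\Delta}_{1})$ computes the same $\Delta_{0}=(\kappa_{2}+j_{1})A-iD$ as the full call $ds(i,D,A,\kappa_{2}+j_{1},0)$. All branching depends only on $\Delta_{0}$ and $\Delta_{1}$, and $k_{0}$ enters the output only additively. Hence the two calls return the same $\Delta$ (implicitly) and $j$-values differing by $j_{1}$. Theorem~\ref{thm:ds}, applied as a black box to the full call, then gives both claims at once. You instead prove a stronger standalone fact: correctness of $ds$ for an arbitrary carry-in $\Delta_{-}$, obtained by rerunning the case analysis of Theorem~\ref{thm:ds} on the translated objective $g$. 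You then compose via $m=k+j_{1}$. Both routes hinge on the same identity for $\Delta_{0}$. The paper's route is shorter because it does not repeat the case analysis. Yours yields an invariant ($\Delta=jA-iD+\Delta_{-}$, with $j$ optimal for the shifted target) that makes the induction in Theorem~\ref{thm:ds-dcmp} immediate. You never need to identify the chained result with a single call $ds(\check{i}_{l},D,A,\check{\kappa}_{l},0)$.

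Your worry about Case~3 is justified, and it reflects the paper's notation rather than a flaw in your proof. With the mathematical floor and $A\nmid\Delta_{0}$, the Case~3 assignment gives $k_{1}=k_{0}+q+1$, and the algorithm is off by one. For example, $i=21$, $D=13$, $A=10$, $k_{0}=i$ gives $\Delta_{0}=-63$ and $k_{1}=28$, and the algorithm returns $j=28$ instead of $27$. The paper's proof of Theorem~\ref{thm:ds} silently writes $k_{1}=k_{0}+q$, i.e., truncation toward zero. Its proof of this lemma inherits that convention through Theorem~\ref{thm:ds}, so adopting it is the right call.

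Two small points to tighten:
\begin{itemize}
\item Case~1 of Algorithm~\ref{alg:ds} never assigns $\Delta$. Your invariant, and the definedness of $\tilde{\Delta}_{1}$, require reading it as returning $\Delta=\Delta_{0}=0$.
\item In your last step, say that \emph{every} minimizer of $|mA-iD|$ over $\mathbb{Z}$ lies in $\{\lfloor iD/A\rfloor,\lceil iD/A\rceil\}\subset\mathbb{N}_{0}$. That is what guarantees the particular $j_{1}+j_{2}$ produced is admissible.
\end{itemize}
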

\begin{IEEEproof}
  See Appendix~\ref{sec:lem-ds-dcmp-proof}.
\end{IEEEproof}
\vspace{\myvspace}%
\noindent%

Theorem~\ref{thm:ds-dcmp} extends Lemma~\ref{lem:ds-dcmp} to a general case.
\begin{theorem}[Additive decomposition of direct search]
  \label{thm:ds-dcmp}
  Given $i_{1},{\ldots},i_{N},D{\in}\mathbb{N}_{0}$ and $A{\in}\mathbb{N}_{+}$,
  we can obtain $j$, the nearest integer solution to $i\frac{D}{A}$, where
  $i{=}\sum_{n=1}^{N}i_{n}$, and $\Delta$, the value of $jA{-}iD$, using
  Algorithm~\ref{alg:ds} for every integer $N{\geq}1$ as follows:
  \begin{equation}
    \label{eq:thm-ds-dcmp-1}
    \begin{split}
      j & = \sum_{n=1}^{N}j_{n}, \\
      \Delta & = \tilde{\Delta}_{N},
    \end{split}
  \end{equation}
  where for $n{=}1,{\ldots},N$,
  \begin{equation}
    \label{eq:thm-ds-dcmp-2}
    \begin{split}
      (j_{n}, \tilde{\Delta}_{n}) & = ds(i_{n}, D, A, \kappa_{n}, \tilde{\Delta}_{n-1}), \\
      \tilde{\Delta}_{0} & = 0,
    \end{split}
  \end{equation}
  and $\kappa_{n}{\in}\mathbb{N}_{0}$ is an initial guess of $j_{n}$.
\end{theorem}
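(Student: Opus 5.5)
We will prove Theorem~\ref{thm:ds-dcmp} by induction on $N$, using Theorem~\ref{thm:ds} for the base case and Lemma~\ref{lem:ds-dcmp} for the inductive step. The cleanest route is to prove a slightly stronger invariant: for every $n{=}1,\ldots,N$, the partial sum $J_{n}{\triangleq}\sum_{m=1}^{n}j_{m}$ is a nearest integer solution to $I_{n}\frac{D}{A}$ with $I_{n}{\triangleq}\sum_{m=1}^{n}i_{m}$, and $\tilde{\Delta}_{n}{=}J_{n}A{-}I_{n}D$. Carrying the identity for $\Delta$ along is essential, because the next call to $ds$ uses $\tilde{\Delta}_{n}$ as its input $\Delta_{\!-}$.

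\emph{Base case.} For $N{=}1$, $\tilde{\Delta}_{0}{=}0$, so $(j_{1},\tilde{\Delta}_{1}){=}ds(i_{1},D,A,\kappa_{1},0)$. Theorem~\ref{thm:ds} then gives directly that $j_{1}$ is the nearest integer solution to $i_{1}\frac{D}{A}$ and that $\tilde{\Delta}_{1}{=}j_{1}A{-}i_{1}D$.

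\emph{Inductive step.} Assume the invariant holds at $n{-}1$. Observe that in Algorithm~\ref{alg:ds} the inputs enter only through the quantity $\Delta_{0}$, since $A$ and $k_{0}$ are used afterwards only in a translation-invariant way. With input $\Delta_{\!-}{=}\tilde{\Delta}_{n-1}{=}J_{n-1}A{-}I_{n-1}D$ and $k_{0}{=}\kappa_{n}$, Line~2 gives
\begin{equation*}
  \Delta_{0} = (J_{n-1}+\kappa_{n})A - I_{n}D .
\end{equation*}
This is exactly the value of $\Delta_{0}$ that $ds(I_{n},D,A,J_{n-1}{+}\kappa_{n},0)$ would produce. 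Every subsequent line shifts $k$ and $\Delta$ identically, so the call $ds(i_{n},D,A,\kappa_{n},\tilde{\Delta}_{n-1})$ returns $(j,\Delta)$ while $ds(I_{n},D,A,J_{n-1}{+}\kappa_{n},0)$ returns $(J_{n-1}{+}j,\Delta)$. Here the $\Delta$ is identical, and the $j$ is offset by $J_{n-1}$. By Theorem~\ref{thm:ds}, $J_{n-1}{+}j_{n}{=}J_{n}$ is the nearest integer solution to $I_{n}\frac{D}{A}$, and $\tilde{\Delta}_{n}{=}J_{n}A{-}I_{n}D$. This is precisely the content of Lemma~\ref{lem:ds-dcmp} with $(i_{1},i_{2})$ replaced by $(I_{n-1},i_{n})$. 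The only difference is that the first subproblem's output $(J_{n-1},\tilde{\Delta}_{n-1})$ comes from the induction hypothesis rather than from a single call $ds(I_{n-1},D,A,\cdot,0)$. So we either invoke the lemma in that form or repeat its shifting argument.

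\emph{Main obstacle.} Lemma~\ref{lem:ds-dcmp} is stated with the first pair produced by one $ds$ call. Its proof, however, only uses that $\tilde{\Delta}_{1}{=}j_{1}A{-}i_{1}D$. We must make explicit that this is all that is needed, which is why the $\Delta$-identity is part of the invariant.

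We must also make sure Theorem~\ref{thm:ds} is applicable, which needs $J_{n-1}{+}\kappa_{n}{\in}\mathbb{N}_{0}$. This holds since $J_{n-1}{\ge}0$ and $\kappa_{n}{\ge}0$. Note also that Theorem~\ref{thm:ds} asserts that the output does not depend on the initial guess, so no condition linking $\kappa_{n}$ to $j_{n}$ is required. Setting $n{=}N$ in the invariant yields \eqref{eq:thm-ds-dcmp-1}.
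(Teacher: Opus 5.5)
Your proof is correct and is essentially the paper's: induction on $N$, with Theorem~\ref{thm:ds} as the base case, the shifting argument of Lemma~\ref{lem:ds-dcmp} for the step, and the identity $\tilde{\Delta}_{n}=J_{n}A-I_{n}D$ carried in the hypothesis. Your observation that the lemma's proof needs only this identity is cleaner than the paper, which instead identifies the first $l$ calls with $ds(\check{i}_{l},D,A,\sum_{n=1}^{l}\kappa_{n},0)$; that identification is not literally true at ties, because Algorithm~\ref{alg:ds} breaks a tie according to the sign of $\Delta_{0}$ and hence according to the initial guess. For the same reason, your aside that the output ``does not depend on the initial guess'' should instead say that it is \emph{a} nearest integer solution for every $\kappa\in\mathbb{N}_{0}$, which is all you actually use.
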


\begin{IEEEproof}
  See Appendix~\ref{sec:thm-ds-dcmp-proof}.
\end{IEEEproof}
\vspace{\myvspace}%

Theorem~\ref{thm:ds-dcmp} enables us to decompose the original problem of
integer linear scaling rounded to the nearest integer with a large value of
$i{=}\sum_{n=1}^{N}i_{n}$ into subproblems with its smaller components (i.e.,
$i_{n}$) without incurring additional rounding errors during the decomposition
and the summation. It is $\tilde{\Delta}_{l}$ ($l{<}N$) that carries the
information on the rounding error of the nearest integer solution to
$\sum_{n=1}^{l}i_{n}\frac{D}{A}$ to the $(l{+}1)$th subproblem; as such,
$\tilde{\Delta}_{l}$ is sequential, yet its absolute value is bounded by
$\frac{A}{2}$ (i.e., \eqref{eq:td-bound}).

Fig.~\ref{fig:adds-ana} shows the integer linear scaling errors resulting from
setting $\tilde{\Delta}_{n}$ to zero for $n{=}1,{\ldots},N{-}1$ in
\eqref{eq:thm-ds-dcmp-2} of the additive decomposition of direct search, where
the integer linear scaling errors are calculated with respect to
${\lfloor}\fp(i\frac{D}{A}){+}0.5{\rfloor}$ based on binary512 of IEEE
754-2008~\cite[Table~III]{Kang:24}.
\begin{figure}[!tb]
  \begin{center}
    \includegraphics[width=.9\linewidth,trim={0 10 0 15},clip]{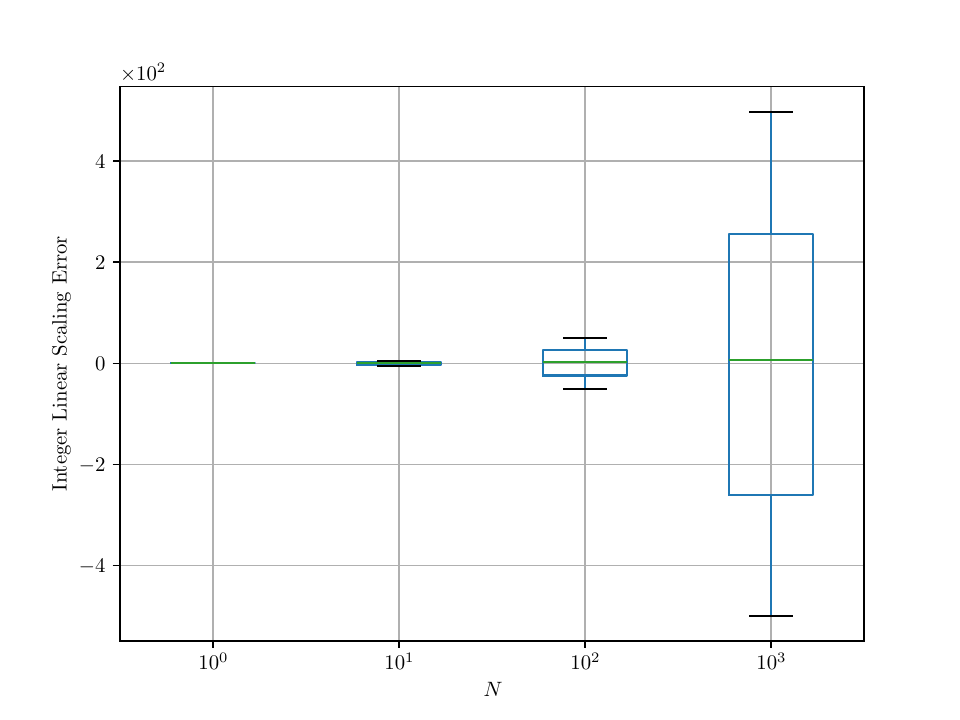}\\
    {\scriptsize (a)}\\
    \includegraphics[width=.9\linewidth,trim={0 10 0 15},clip]{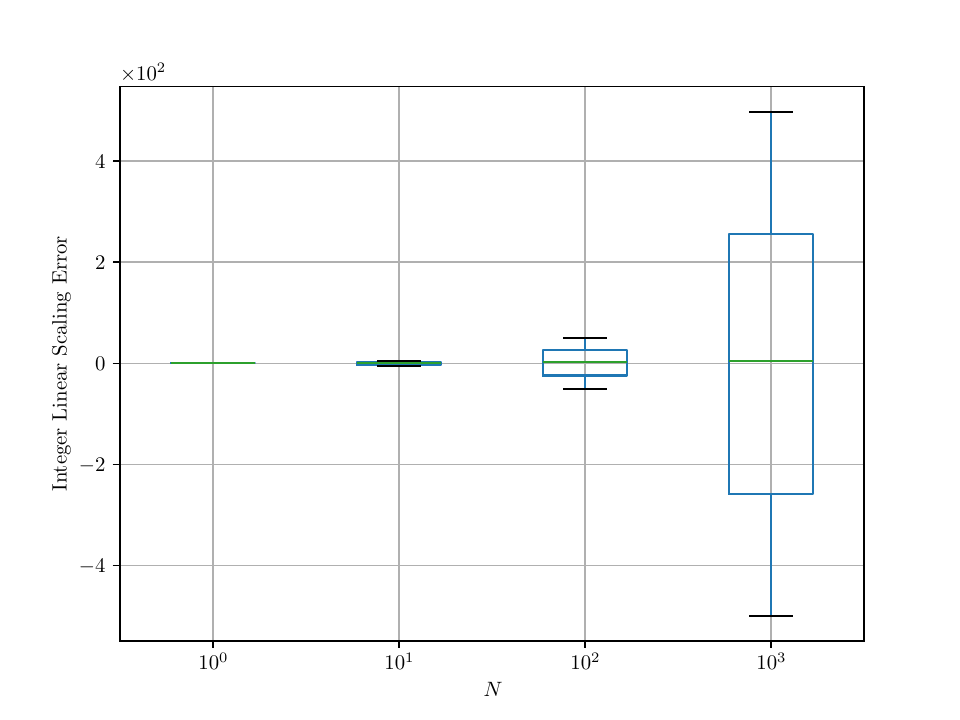}\\
    {\scriptsize (b)}
  \end{center}
  \caption{Integer linear scaling errors with $\tilde{\Delta}_{n}{=}0$ for
    $n{=}1,{\ldots},N{-}1$ in \eqref{eq:thm-ds-dcmp-2} of the additive
    decomposition of direct search based on (a) 32-bit and (b) 64-bit integers.}
  \label{fig:adds-ana}
\end{figure}
We set $i$ and $D$ to \num{1e6} for 32-bit integers and \num{1e12} for 64-bit
integers, respectively. We also set $A$ to $D{+}u$, where $u$ is an integer
uniformly distributed in the range of $[1,\num{1e3}]$ for 32-bit integers and
$[1,\num{1e6}]$ for 64-bit integers, respectively. Because $i{<}A$, $D{<}A$, and
$iA{\geq}2^{W\!-1}\!{-}1$, not only normal integer division but also the
multiplicative decomposition of integer division cannot handle these cases due
to overflows. From the results, we can observe that the integer linear scaling
errors are bounded by ${\pm}0.5N$ as the rounding errors are bounded by
${\pm}0.5$, which demonstrate how critical the transfer of $\tilde{\Delta}_{n}$
between the consecutive sub-problems in the additive decomposition of direct
search is.

\subsection{Space-Time Trade-off}
\label{sec:st-tradeoff}
We analyze the computational complexities and the non-overflow conditions of the
algorithms presented in Theorems~\ref{thm:mdid}--\ref{thm:ds-dcmp} and
Bresenham's algorithm~\cite{bresenham65:_algor} for integer linear scaling
rounded to the nearest integer, whose results are summarized in
Table~\ref{tab:algo-summary}.
\begin{table*}[!tb]
  \centering
  \caption{Summary of the algorithms for integer linear scaling rounded to the
    nearest integer.}
  \label{tab:algo-summary} {%
    \renewcommand{\arraystretch}{1.3}%
    \begin{tabular}{lccl} \toprule%
      \multicolumn{1}{c}{\multirow{2}{*}{Algorithm}} & \multicolumn{2}{c}{Complexity} & \multicolumn{1}{c}{\multirow{2}{*}{Non-overflow conditions}} \\
      \cmidrule{2-3}%
                                                     & Space & Time & \\
      \midrule%
      Bresenham's algorithm~\cite{bresenham65:_algor} & $\bigO(1)$ & $\bigO\left(\min(i, D)\right)$ & {\renewcommand{\arraystretch}{1.3}\begin{tabular}{@{}l@{}} $i{\leq}2^{W\!-1}\!{-}1$, $D{\leq}2^{W\!-1}\!{-}1$, $A{\leq}2^{W\!-1}\!{-}1$,\\ $\left\lfloor i\frac{D}{A}{+}0.5 \right\rfloor \leq 2^{W\!-1}\!{-}1$, and\\ ${-}2^{W\!-2} \leq A \leq 2^{W\!-2}\!{-}1$. \end{tabular}} \\
      \addlinespace%
      {\renewcommand{\arraystretch}{1.3}\begin{tabular}{@{}l@{}}Multiplicative decomposition of\\integer division (Theorem~\ref{thm:mdid})\end{tabular}} & $\bigO(1)$ & $\bigO(1)$ & {\renewcommand{\arraystretch}{1.3}\begin{tabular}{@{}l@{}} $i{\leq}2^{W\!-1}\!{-}1$, $D{\leq}2^{W\!-1}\!{-}1$, $A{\leq}2^{W\!-1}\!{-}1$,\\ $\left\lfloor i\frac{D}{A}{+}0.5 \right\rfloor \leq 2^{W\!-1}\!{-}1$,\\ ${\lfloor}\frac{i}{A}{\rfloor}D{\leq}2^{W\!-1}\!{-}1$, $(i{\bmod}A)D{+}{\lfloor}\frac{A}{2}{\rfloor}{\leq}2^{W\!-1}\!{-}1$, and\\ $iD \leq 2^{W\!-1}\!{-}1$ when $i{<}A$ and $D{<}A$. \end{tabular}} \\
      \addlinespace%
      Direct search (Theorem~\ref{thm:ds}) & $\bigO(1)$ & $\bigO(1)$ & {\renewcommand{\arraystretch}{1.3}\begin{tabular}{@{}l@{}} $i{\leq}2^{W\!-1}\!{-}1$, $D{\leq}2^{W\!-1}\!{-}1$, $A{\leq}2^{W\!-1}\!{-}1$,\\ $\left\lfloor i\frac{D}{A}{+}0.5 \right\rfloor \leq 2^{W\!-1}\!{-}1$, and\\
        ${-}2^{W\!-1} \leq (\kappa{-}i)A{+}i(A{-}D) \leq 2^{W\!-1}{-}1$. \end{tabular}} \\
      \addlinespace%
      {\renewcommand{\arraystretch}{1.3}\begin{tabular}{@{}l@{}}Additive decomposition of\\direct search (Theorem~\ref{thm:ds-dcmp})\end{tabular}} & $\bigO(N)$ & $\bigO(N)$ & {\renewcommand{\arraystretch}{1.3}\begin{tabular}{@{}l@{}} For $n{=}1,{\ldots},N$,\\ $i_{n}{\leq}2^{W\!-1}\!{-}1$, $D{\leq}2^{W\!-1}\!{-}1$, $A{\leq}2^{W\!-1}\!{-}1$,\\ $\left\lfloor \left(\sum_{i=1}^{N}i_{n}\right)\frac{D}{A}{+}0.5 \right\rfloor \leq 2^{W\!-1}\!{-}1$, and\\
        ${-}2^{W\!-1}{+}\frac{A}{2} \leq (\kappa_{n}{-}i_{n})A{+}i_{n}(A{-}D) \leq 2^{W\!-1}{-}1{-}\frac{A}{2}$. \end{tabular}} \\
      \bottomrule%
    \end{tabular}
  }%
\end{table*}
As stated in the beginning of Section~\ref{sec:ils-rttn}, the nearest integer
solution to a general case can be obtained by applying the sign of an original
problem to the solution to its non-negative version provided by the algorithms
under consideration, so we assume that not only the input and output values but
also all the intermediate values of an algorithm are represented by a
fixed-width $W$-bit signed integer type (e.g., \texttt{int32\_t} in C/\cpp) for
the analysis.

Because the space complexity captured by $\bigO({\cdot})$ describes only the
limiting behavior of the memory requirements, we also investigate the
non-overflow conditions of each algorithm during its entire operation, which
gives us an alternative and practical measure of its space complexity; when
Algorithm A needs a bigger-size integer type than Algorithm B to avoid
overflows, we could say that the space complexity of Algorithm A is higher than
Algorithm B.


As for Bresenham's algorithm, due to its recursive construction of
$\td_{(\cdot)}$ whose absolute value is bounded by
$2A$~\cite[Lemma~1]{bresenham65:_algor}, we have the third condition for its
representation by a $W$-bit signed integer type. The direct search algorithm in
Theorem~\ref{thm:ds} and its additive decomposition in
Theorem~\ref{thm:ds-dcmp}, too, use $\Delta_{(\cdot)}$ whose absolute value is
bounded by $\frac{A}{2}$ and therefore require the third condition for
$\Delta_{0}$. The third conditions of the multiplicative decomposition of
integer division in Theorem~\ref{thm:mdid}, on the other hand, are for the
intermediate calculations in \eqref{eq:mdid}; its fourth condition for the
product $iD$ is for the case when both $i$ and $D$ are smaller than $A$, which
reduces it back to normal integers division rounded to the nearest integer as
discussed in Section~\ref{sec:mdid}.

Regarding the implications of the non-overflow conditions, we first consider a
special case of $iD{=}2^{W\!-1}$ and $A{=}2^{W\!-1}\!{-}1$ for the
multiplicative decomposition of integer division algorithm. The nearest integer
solution to $i\frac{D}{A}$ in this case is just one for $W{>}1$, but the product
$iD$ results in overflow when $i{\geq}2$ or $D{\geq}2$.
The direct search algorithm can handle this case by setting $i{=}2$,
$D{=}2^{W\!-2}$, and $k_{0}{=}i$ without overflows. Bresenham's algorithm,
however, cannot support $A{=}2^{W\!-1}\!{-}1$ due to the condition on $A$.

For another special case of $\frac{D}{A}{\approx}1$, which holds for clock skew
compensation, the major issue is the non-overflow condition on $i$ as we need
the nearest integer solution to $i\frac{D}{A}$ for a large value of
$i$. Compared to Bresenham's and the multiplicative decomposition of integer
division algorithms, the direct search algorithm can provide the nearest integer
solution even for $i{\geq}2^{W\!-1}\!{-}1$ thanks to the additive decomposition
at the expense of the increased time complexity.

The results of the analysis of the computational complexities and non-overflow
conditions of the algorithms summarized in Table~\ref{tab:algo-summary} and the
related discussions on their implications show that the direct search algorithm
can better exploit the trade-off between space and time complexity in obtaining
the nearest integer solution to integer linear scaling through additive
decomposition together with flexible setting of an initial guess of the solution
(i.e., $k_{0}$ in Algorithm~\ref{alg:ds}).

\section{Numerical Examples: Clock Skew Compensation}
\label{sec:numerical-examples}
To demonstrate the space-time trade-off in integer linear scaling rounded to the
nearest integer in a practical context, we apply the decomposition algorithms
described in Section~\ref{sec:ils-rttn-algos} to clock skew compensation under
two different scenarios: The first scenario based on 32-bit integers assumes a
resource-constrained platform for WSN/IoT running on 1-\si{\MHz} clock and
focuses on the capability of each algorithm carrying out clock skew compensation
without overflows. The second scenario based on 64-bit integers, on the other
hand, targets a resourceful platform running on 1-\si{\GHz} clock, where the
focus is shifted to the scalability of the algorithms even with extremely large
values of $i$ up to \num{1e18}.

The results of clock skew compensation based on the decomposition algorithms are
summarized in Tables~\ref{tab:csc-results-int32-1},
\ref{tab:csc-results-int32-2}, \ref{tab:csc-results-int64-1}, and
\ref{tab:csc-results-int64-2} in comparison to those based on the floating-point
arithmetic with various precisions. We set $D$ to \num{1e6} and \num{1e8} for
32-bit integers and \num{1e9} and \num{1e12} for 64-bit integers, respectively,
and generate one million samples of $A$ corresponding to clock skew uniformly
distributed in the range of $[{-}100\,\text{ppm},100\,\text{ppm}]$. Given the
clock resolutions of \SI{1}{\us} and \SI{1}{\ns} for the two scenarios, the
minimum and the maximum values of $i$ (i.e., the hardware clock) correspond to
\SI{1}{\s} and \SI{1e3}{\s} in Table~\ref{tab:csc-results-int32-1},
\SI{1e-3}{\s} and \SI{1e1}{\s} in Table~\ref{tab:csc-results-int32-2},
\SI{1e3}{\s} and \SI{1e9}{\s} in Table~\ref{tab:csc-results-int64-1}, and
\SI{1e-2}{\s} and \SI{1e5}{\s} in Table~\ref{tab:csc-results-int64-2}
respectively. For the clock skew compensation based on the floating-point
arithmetic, we round half up the results to obtain the nearest integer values
throughout the experiments.
\begin{table*}[!tb]
  \centering
  \begin{threeparttable}
    \caption{Clock skew compensation based on integer linear scaling rounded to
      the nearest integer with 32-bit integers and
      $D{=}\num[round-precision=1]{1e6}$.}
    \label{tab:csc-results-int32-1} {%
      \sisetup{scientific-notation=true,round-mode=figures,group-digits=false}%
      \begin{tabular}{@{}
        l
        *{2}{S[round-precision=1,table-format=1e-1]}
        *{3}{S[round-precision=5,table-format=1.4e-1]}
        c
        @{}}%
        \toprule%
        \multicolumn{1}{c}{\multirow{2}{*}{Algorithm}} & \multicolumn{1}{c}{\multirow{2}{*}{$i$}} & \multicolumn{1}{c}{\multirow{2}{*}{$N$}} & \multicolumn{3}{c}{Compensation error\tnote{*}} & \multicolumn{1}{c}{\multirow{2}{*}{Notes}} \\
        \cmidrule{4-6}%
                                   & & & \multicolumn{1}{c}{Min.} & \multicolumn{1}{c}{Max.} & \multicolumn{1}{c}{Avg.} &  \\
        \midrule%
        \multirow{4}{*}{Double precision (binary64)\tnote{\dag}}
                                   & 1e6 & {--} & 0 & 0 & 0 & {--} \\
                                   & 1e7 & {--} & 0 & 0 & 0 & {--} \\
                                   & 1e8 & {--} & 0 & 0 & 0 & {--} \\
                                   & 1e9 & {--} & 0 & 0 & 0 & {--} \\
        \midrule%
        \multirow{4}{*}{Single precision (binary32)\tnote{\dag}}
                                   & 1e6 & {--} & 0 & 0 & 0 & {--} \\
                                   & 1e7 & {--} & 0 & 0 & 0 & {--} \\
                                   & 1e8 & {--} & -4 & 1 & -1.6962 & {--} \\
                                   & 1e9 & {--} & -19 & 44 & 1.2973e+01 & {--} \\
        \midrule%
        \multirow{4}{*}{\renewcommand{\arraystretch}{1.3}\begin{tabular}{@{}l@{}}Multiplicative decomposition of\\integer division (Theorem~\ref{thm:mdid})\end{tabular}} & 1e6 & {--} & {--} & {--} & {--} & Overflow \\ 
                                   & 1e7 & {--} & {--} & {--} & {--} & Overflow \\ 
                                   & 1e8 & {--} & {--} & {--} & {--} & Overflow \\ 
                                   & 1e9 & {--} & {--} & {--} & {--} & Overflow \\ 
        \midrule%
        \multirow{4}{*}{\renewcommand{\arraystretch}{1.3}\begin{tabular}{@{}l@{}}Additive decomposition of\\direct search (Theorem~\ref{thm:ds-dcmp})\end{tabular}} & 1e6 & 1 & 0 & 0 & 0 & -- \\
                                   & 1e7 & 1 & 0 & 0 & 0 & -- \\
                                   & 1e8 & 1e1 & 0 & 0 & 0 & -- \\
                                   & 1e9 & 100 & 0 & 0 & 0 & -- \\
        \bottomrule%
      \end{tabular}
    }%
    \begin{tablenotes}
    \item[*] With respect to ${\lfloor}\fp(i\frac{D}{A}){+}0.5{\rfloor}$ based
      on binary512 of IEEE 754-2008~\cite[Table~III]{Kang:24}.
    \item[\dag] ${\lfloor}\fp(i\frac{D}{A}){+}0.5{\rfloor}$ based on the
      corresponding precision.
    \end{tablenotes}
  \end{threeparttable}
\end{table*}

\begin{table*}[!tb]
  \centering
  \begin{threeparttable}
    \caption{Clock skew compensation based on integer linear scaling rounded to
      the nearest integer with 32-bit integers and
      $D{=}\num[round-precision=1]{1e8}$.}
    \label{tab:csc-results-int32-2} {%
      \sisetup{scientific-notation=true,round-mode=figures,group-digits=false}%
      \begin{tabular}{@{}
        l
        *{2}{S[round-precision=1,table-format=1e-1]}
        *{3}{S[round-precision=5,table-format=1.4e-1]}
        c
        @{}}%
        \hline%
        \multicolumn{1}{c}{\multirow{2}{*}{Algorithm}} & \multicolumn{1}{c}{\multirow{2}{*}{$i$}} & \multicolumn{1}{c}{\multirow{2}{*}{$N$}} & \multicolumn{3}{c}{Compensation error\tnote{*}} & \multicolumn{1}{c}{\multirow{2}{*}{Notes}} \\
        \cline{4-6}%
                                   & & & \multicolumn{1}{c}{Min.} & \multicolumn{1}{c}{Max.} & \multicolumn{1}{c}{Avg.} &  \\
        \hline%
        \multirow{5}{*}{Double precision (binary64)\tnote{\dag}}
                                   & 1e5 & {--} & 0 & 0 & 0 & {--} \\
                                   & 1e6 & {--} & 0 & 0 & 0 & {--} \\
                                   & 1e7 & {--} & 0 & 0 & 0 & {--} \\
                                   & 1e8 & {--} & 0 & 0 & 0 & {--} \\
                                   & 1e9 & {--} & 0 & 0 & 0 & {--} \\
        \hline%
        \multirow{5}{*}{Single precision (binary32)\tnote{\dag}}
                                   & 1e5 & {--} & 0 & 1 & 4.4958e-02 & {--} \\
                                   & 1e6 & {--} & 0 & 0 & 0 & {--} \\
                                   & 1e7 & {--} & 0 & 1 & 1.19e-01 & {--} \\
                                   & 1e8 & {--} & -4 & 5 & 1.8586e-01 & {--} \\
                                   & 1e9 & {--} & -40 & 82 & 1.4780e+01 & {--} \\
        \hline%
        \multirow{5}{*}{\renewcommand{\arraystretch}{1.3}\begin{tabular}{@{}l@{}}Multiplicative decomposition of\\fixed-point division (Theorem~\ref{thm:mdid})\end{tabular}} & 1e5 & {--} & {--} & {--} & {--} & Overflow \\ 
                                   & 1e6 & {--} & {--} & {--} & {--} & Overflow \\ 
                                   & 1e7 & {--} & {--} & {--} & {--} & Overflow \\ 
                                   & 1e8 & {--} & {--} & {--} & {--} & Overflow \\ 
                                   & 1e9 & {--} & {--} & {--} & {--} & Overflow \\ 
        \hline%
        \multirow{5}{*}{\renewcommand{\arraystretch}{1.3}\begin{tabular}{@{}l@{}}Additive decomposition of\\direct search (Theorem~\ref{thm:ds-dcmp})\end{tabular}} & 1e5 & 1 & 0 & 0 & 0 & {--} \\
                                   & 1e6 & 1e1 & 0 & 0 & 0 & {--} \\
                                   & 1e7 & 1e2 & 0 & 0 & 0 & {--} \\
                                   & 1e8 & 1e3 & 0 & 0 & 0 & {--} \\
                                   & 1e9 & 1e4 & 0 & 0 & 0 & {--} \\
        \hline%
      \end{tabular}
    }%
    \begin{tablenotes}
    \item[*] With respect to ${\lfloor}\fp(i\frac{D}{A}){+}0.5{\rfloor}$ based
      on binary512 of IEEE 754-2008~\cite[Table~III]{Kang:23}.
    \item[\dag] ${\lfloor}\fp(i\frac{D}{A}){+}0.5{\rfloor}$ based on single
      precision.
    \end{tablenotes}
  \end{threeparttable}
\end{table*}
%

\begin{table*}[!tb]
  \centering
  \begin{threeparttable}
    \caption{Clock skew compensation based on integer linear scaling rounded to
      the nearest integer with 64-bit integers and
      $D{=}\num[round-precision=1]{1e9}$.}
    \label{tab:csc-results-int64-1} {%
      \sisetup{scientific-notation=true,round-mode=figures,group-digits=false}%
      \begin{tabular}{@{}
        l
        *{2}{S[round-precision=1,table-format=1e-1]}
        *{3}{S[round-precision=5,table-format=1.4e-1]}
        @{}}%
        \toprule%
        \multicolumn{1}{c}{\multirow{2}{*}{Algorithm}} & \multicolumn{1}{c}{\multirow{2}{*}{$i$}} & \multicolumn{1}{c}{\multirow{2}{*}{$N$}} & \multicolumn{3}{c}{Compensation error\tnote{*}} \\
        \cmidrule{4-6}%
                                   & & & \multicolumn{1}{c}{Min.} & \multicolumn{1}{c}{Max.} & \multicolumn{1}{c}{Avg.} \\
        \midrule%
        \multirow{7}{*}{Quadruple precision (binary128)\tnote{\dag}}
                                   & 1e12 & {--} & 0 & 0 & 0 \\
                                   & 1e13 & {--} & 0 & 0 & 0 \\
                                   & 1e14 & {--} & 0 & 0 & 0 \\
                                   & 1e15 & {--} & 0 & 0 & 0 \\
                                   & 1e16 & {--} & 0 & 0 & 0 \\
                                   & 1e17 & {--} & 0 & 0 & 0 \\
                                   & 1e18 & {--} & 0 & 0 & 0 \\
        \midrule%
        \multirow{7}{*}{Double precision (binary64)\tnote{\dag}}
                                   & 1e12 & {--} & 0 & 0 & 0 \\
                                   & 1e13 & {--} & 0 & 0 & 0 \\
                                   & 1e14 & {--} & 0 & 0 & 0 \\
                                   & 1e15 & {--} & -1 & 0 & -3.9930e-02 \\
                                   & 1e16 & {--} & -2 & 0 & -8.7624e-01 \\
                                   & 1e17 & {--} & -13 & 3 & -4.9326 \\
                                   & 1e18 & {--} & -77 & 50 & -1.5377e+01 \\
        \midrule%
        \multirow{7}{*}{Single precision (binary32)\tnote{\dag}}
                                   & 1e12 & {--} & -7.9880e4 & 3.8672e4 & -2.0476e+04 \\
                                   & 1e13 & {--} & -5.8453e5 & 1.0421e6 & 2.1766e+05 \\
                                   & 1e14 & {--} & -5.1420e6 & 9.4285e6 & 2.0610e+06 \\
                                   & 1e15 & {--} & -7.5579e7 & 4.3954e7 & -1.4900e+07 \\
                                   & 1e16 & {--} & -1.5611e9 & 1.3190e8 & -5.9591e+08 \\
                                   & 1e17 & {--} & -9.8398e9 & 4.9293e9 & -2.5240e+09 \\
                                   & 1e18 & {--} & -5.0991e10 & 7.6703e10 & 1.1032e+10 \\
        \midrule%
        \multirow{7}{*}{\renewcommand{\arraystretch}{1.3}\begin{tabular}{@{}l@{}}Multiplicative decomposition of\\integer division (Theorem~\ref{thm:mdid})\end{tabular}} & 1e12 & {--} & 0 & 0 & 0 \\
                                   & 1e13 & {--} & 0 & 0 & 0 \\
                                   & 1e14 & {--} & 0 & 0 & 0 \\
                                   & 1e15 & {--} & 0 & 0 & 0 \\
                                   & 1e16 & {--} & 0 & 0 & 0 \\
                                   & 1e17 & {--} & 0 & 0 & 0 \\
                                   & 1e18 & {--} & 0 & 0 & 0 \\
        \midrule%
        \multirow{7}{*}{\renewcommand{\arraystretch}{1.3}\begin{tabular}{@{}l@{}}Additive decomposition of\\direct search (Theorem~\ref{thm:ds-dcmp})\end{tabular}} &  1e12 & 1 & 0 & 0 & 0 \\
                                   & 1e13 & 1 & 0 & 0 & 0 \\
                                   & 1e14 & 1e1 & 0 & 0 & 0 \\
                                   & 1e15 & 1e2 & 0 & 0 & 0 \\
                                   & 1e16 & 1e3 & 0 & 0 & 0 \\
                                   & 1e17 & 1e4 & 0 & 0 & 0 \\
                                   & 1e18 & 1e5 & 0 & 0 & 0 \\
        \bottomrule%
      \end{tabular}
    }%
    \begin{tablenotes}
    \item[*] With respect to ${\lfloor}\fp(i\frac{D}{A}){+}0.5{\rfloor}$ based
      on binary512 of IEEE 754-2008~\cite[Table~III]{Kang:24}.
    \item[\dag] ${\lfloor}\fp(i\frac{D}{A}){+}0.5{\rfloor}$ based on the
      corresponding precision.
    \end{tablenotes}
  \end{threeparttable}
\end{table*}

\begin{table*}[!tb]
  \centering
  \begin{threeparttable}
    \caption{Clock skew compensation based on integer linear scaling rounded to
      the nearest integer with 64-bit integers and
      $D{=}\num[round-precision=1]{1e12}$.}
    \label{tab:csc-results-int64-2} {%
      \sisetup{scientific-notation=true,round-mode=figures,group-digits=false}%
      \begin{tabular}{@{}
        l
        *{2}{S[round-precision=1,table-format=1e-1]}
        *{3}{S[round-precision=5,table-format=1.4e-1]}
        c
        @{}}%
        \midrule%
        \multirow{2}{*}{Algorithm} & \multicolumn{1}{c}{\multirow{2}{*}{$i$}} & \multicolumn{1}{c}{\multirow{2}{*}{$N$}} & \multicolumn{3}{c}{Compensation error\tnote{*}} & \multicolumn{1}{c}{\multirow{2}{*}{Notes}} \\
        \cmidrule{4-6}%
                                   & & & \multicolumn{1}{c}{Min.} & \multicolumn{1}{c}{Max.} & \multicolumn{1}{c}{Avg.} & \\
        \midrule%
        \multirow{9}{*}{Quadruple precision (binary128)\tnote{\dag}}
                                   & 1e10 & {--} & 0 & 0 & 0 & {--} \\
                                   & 1e11 & {--} & 0 & 0 & 0 & {--} \\
                                   & 1e12 & {--} & 0 & 0 & 0 & {--} \\
                                   & 1e13 & {--} & 0 & 0 & 0 & {--} \\
                                   & 1e14 & {--} & 0 & 0 & 0 & {--} \\
                                   & 1e15 & {--} & 0 & 0 & 0 & {--} \\
                                   & 1e16 & {--} & 0 & 0 & 0 & {--} \\
                                   & 1e17 & {--} & 0 & 0 & 0 & {--} \\
                                   & 1e18 & {--} & 0 & 0 & 0 & {--} \\
        \midrule%
        \multirow{9}{*}{Double precision (binary64)\tnote{\dag}}
                                   & 1e10 & {--} & 0 & 0 & 0 & {--} \\
                                   & 1e11 & {--} & 0 & 0 & 0 & {--} \\
                                   & 1e12 & {--} & 0 & 0 & 0 & {--} \\
                                   & 1e13 & {--} & -1 & 0 & -5.0650e-03 & {--} \\
                                   & 1e14 & {--} & -1 & 0 & -4.9660e-03 & {--} \\
                                   & 1e15 & {--} & -1 & 0 & -5.4861e-02 & {--} \\
                                   & 1e16 & {--} & -1 & 1 & 4.0231e-01 & {--} \\
                                   & 1e17 & {--} & 1 & 17 & 9.1050e+00 & {--} \\
                                   & 1e18 & {--} &  -83 & 44 & -1.9741e+01 & {--} \\
        \midrule%
        \multirow{9}{*}{Single precision (binary32)\tnote{\dag}}
                                   & 1e10 & {--} & -540 &  996 & 2.2672e+02 & {--} \\
                                   & 1e11 & {--} & -3804 & 8973 & 2.4693e+03 & {--} \\
                                   & 1e12 & {--} & -23943 & 48410 & 1.1400e+04 & {--} \\
                                   & 1e13 & {--} & -361943 & 1191550 & 4.3032e+05 & {--} \\
                                   & 1e14 & {--} & -8863718 & 4035600 & -2.4610e+06 & {--} \\
                                   & 1e15 & {--} & -48584545 & 73491195 & 1.1034e+07 & {--} \\
                                   & 1e16 & {--} & -1421855366 & 128570432 & -6.2788e+08 & {--} \\
                                   & 1e17 & {--} & -7943826631 & 5656767492 & -1.4239e+09 & {--} \\
                                   & 1e18 & {--} &  -79025477331 & 49332003960 & -1.5484e+10 & {--} \\
        \midrule%
        \multirow{9}{*}{\renewcommand{\arraystretch}{1.3}\begin{tabular}{@{}l@{}}Multiplicative decomposition of\\fixed-point division (Theorem~\ref{thm:mdid})\end{tabular}} & 1e10 & {--} & {--} & {--} & {--} & Overflow \\
                                   & 1e11 & {--} & {--} & {--} & {--} & Overflow \\
                                   & 1e12 & {--} & {--} & {--} & {--} & Overflow \\
                                   & 1e13 & {--} & {--} & {--} & {--} & Overflow \\
                                   & 1e14 & {--} & {--} & {--} & {--} & Overflow \\
                                   & 1e15 & {--} & {--} & {--} & {--} & Overflow \\
                                   & 1e16 & {--} & {--} & {--} & {--} & Overflow \\
                                   & 1e17 & {--} & {--} & {--} & {--} & Overflow \\
                                   & 1e18 & {--} & {--} & {--} & {--} & Overflow \\
        \midrule%
        \multirow{9}{*}{\renewcommand{\arraystretch}{1.3}\begin{tabular}{@{}l@{}}Additive decomposition of\\direct search (Theorem~\ref{thm:ds-dcmp})\end{tabular}} &  1e10 & 1 & 0 & 0 & 0 \\
                                   & 1e11 & 1e1 & 0 & 0 & 0 & {--} \\
                                   & 1e12 & 1e2 & 0 & 0 & 0 & {--} \\
                                   & 1e13 & 1e3 & 0 & 0 & 0 & {--} \\
                                   & 1e14 & 1e4 & 0 & 0 & 0 & {--} \\
                                   & 1e15 & 1e5 & 0 & 0 & 0 & {--} \\
                                   & 1e16 & 1e6 & 0 & 0 & 0 & {--} \\
                                   & 1e17 & 1e7 & 0 & 0 & 0 & {--} \\
                                   & 1e18 & 1e8 & 0 & 0 & 0 & {--} \\
        \bottomrule%
      \end{tabular}
    }%
    \begin{tablenotes}
    \item[*] With respect to ${\lfloor}\fp(i\frac{D}{A}){+}0.5{\rfloor}$ based
      on binary512 of IEEE 754-2008~\cite[Table~III]{Kang:23}.
    \item[\dag] ${\lfloor}\fp(i\frac{D}{A}){+}0.5{\rfloor}$ based on the
      corresponding precision.
    \end{tablenotes}
  \end{threeparttable}
\end{table*}
%


Regarding the overflows, we observe that the multiplicative decomposition of
integer division algorithm overflows for all the cases under the first scenario
where the ratio of $D$ to the maximum value of the underlying integer type is
relatively larger (i.e., $\frac{D}{2^{31}-1}{\approx}\num{4.6566e-4}$), while it
can handle all the cases of the second scenario when the said ratio is much
smaller (i.e., $\frac{D}{2^{63}-1}{\approx}\num{1.0842e-10}$). In comparison,
the additive decomposition of direct search algorithm can handle all the cases
under both scenarios without overflows but at the expense of increased
computational complexity when the value of $i$ approaches the maximum value of
the underlying integer type.

As for the compensation errors, we observe that the additive decomposition of
direct search algorithm is equivalent to the clock skew compensation based on
64-bit double-precision floating-point arithmetic under the first scenario based
on 32-bit integers; under the second scenario based on 64-bit integers, on the
other hand, both algorithms are equivalent to the clock skew compensation based
on 128-bit quadruple-precision floating-point arithmetic. This is remarkable in
that the decomposition algorithms based only on integer operations require only
half the size of the data types for the clock skew compensation based on
floating-point arithmetic for the same compensation error.

\section{Conclusions}
\label{sec:conclusions}
We have generalized the problem of clock skew compensation as the integer linear
scaling problem in the form of $i\frac{D}{A}$ and proposed the multiplicative
decomposition of integer division and the additive decomposition of direct
search algorithms, which decompose the product of $iD$ and the input $i$,
respectively, during the intermediate calculation to reduce overflows in
obtaining the nearest integer solution based on a fixed-width integer type. The
proposed algorithms are not only immune to floating-point precision loss due to
their using only integer operations but also non-incremental to reduce their
computational complexities unlike our prior approaches based on Bresenham's
algorithm.

We have theoretically established both decomposition algorithms based on a
unified and rigorous formulation of the problem of the integer linear scaling
rounded to the nearest integer---which has been studied in several disciplines
under different names but not based on a common framework and, oftentimes,
without formal proofs and taking into account rounding errors---and discussed
the space-time trade-off through the analysis of their computational
complexities and non-overflow conditions.

Through numerical examples, we have demonstrated the relative advantages and
disadvantages of the two decomposition algorithms in the practical context of
clock skew compensation under two different scenarios. Specifically, the
multiplicative decomposition of integer division algorithm overflows under the
first scenario based on 32-bit integers, while it can handle all the cases of
the second scenario based on 64-bit integers with a constant time complexity
(i.e., $\bigO(1)$); in comparison, the additive decomposition of direct search
algorithm can handle all the cases under both scenarios without overflows but at
the expense of increased computational complexity when the value of $i$
approaches the maximum value of the underlying integer type. The results also
highlight the trade-off between the bounded compensation errors and lower space
complexity of the integer-based decomposition algorithms and the lower chances
of overflows resulting from the wide ranges of numbers of the clock skew
compensation based on floating-point arithmetic.

Of the two algorithms, the results of the analysis of computational complexities
and non-overflow conditions show that the direct search algorithm can better
exploit the trade-off between space and time complexity in obtaining the nearest
integer solution to integer linear scaling through additive decomposition. Given
the sequential nature of $\tilde{\Delta}_{n}$ in \eqref{eq:thm-ds-dcmp-2} of
Theorem~\ref{thm:ds-dcmp}, however, the parallelization of the additive
decomposition of direct search algorithm is not straightforward, which is an
interesting topic for further research together with an alternative setting of
an initial guess of the solution (i.e., $k_{0}$ in Algorithm~\ref{alg:ds}) to
avoid an overflow given the values of $i$, $D$, and $A$.

\appendices

\section{Proof of Lemma~\ref{lem:ils-rttn-fp}}
\label{sec:lem-ils-rttn-fp-proof}
Let $a$ and $b$ denote the integer and the fractional part of $i\frac{D}{A}$
such that $i\frac{D}{A}{=}a{+}b$ where $a{\in}\mathbb{N}_{0}$,
$b{\in}\mathbb{R}$, and $0{\leq}b{<}1$.

Suppose that there exists $k{\in}\mathbb{N}_{0}$ such that $k{=}j{+}l$, where
$l{\in}\mathbb{N}$ and $|l|{\geq}1$, and
\begin{equation}
  \label{eq:ils-rttn-fp-assumption}
  \left| j - i\frac{D}{A} \right| > \left| k - i\frac{D}{A} \right|.
\end{equation}
  
If $b{<}0.5$, $j$ becomes $a$. Also,
\[
  \left| j - i\frac{D}{A} \right| = \left| a - (a + b) \right| = \left| b
  \right| < 0.5,
\]
and
\[
  \left| k - i\frac{D}{A} \right| = \left| (a + l) - (a + b) \right| = \left| l
    - b \right| > 0.5.
\]
Therefore,
\begin{equation}
  \label{eq:ils-rttn-fp-contradiction1}
  \left| j - i\frac{D}{A} \right| < \left| k - i\frac{D}{A} \right|.
\end{equation}

If $b{\geq}0.5$, on the other hand, $j$ becomes $a{+}1$. Also,
\[
  \left| j - i\frac{D}{A} \right| = \left| (a + 1) - (a + b) \right| = \left| 1
    - b \right| \leq 0.5,
\]
and
\[
  \left| k - i\frac{D}{A} \right| = \left| (a + 1 + l) - (a + b) \right| =
  \left| 1 + l - b \right| \geq 0.5.
\]
Therefore,
\begin{equation}
  \label{eq:ils-rttn-fp-contradiction2}
  \left| j - i\frac{D}{A} \right| \leq \left| k - i\frac{D}{A} \right|.
\end{equation}
\eqref{eq:ils-rttn-fp-contradiction1} and \eqref{eq:ils-rttn-fp-contradiction2}
contradict the assumption of the existence of $k$ satisfying
\eqref{eq:ils-rttn-fp-assumption}. Therefore, $j$ satisfies
\eqref{eq:ils-rttn-1} and is the nearest integer solution to $i\frac{D}{A}$.
\hfill\IEEEQED

\section{Proof of Theorem~\ref{thm:mdid}}
\label{sec:thm-mdid-proof}
According to the theorem of division algorithm (also called Euclid's division
lemma)~\cite[Theorem~2.1]{burton10:_elemen_number_theor}, there exist unique
integers $q,r{\in}\mathbb{N}_{0}$ such that
\begin{equation}
  \label{eq:mdid-proof-1}
  i = Aq + r,
\end{equation}
where $q{=}{\lfloor}\frac{i}{A}{\rfloor}$ and $r{=}i \bmod A$. Multiplying both
sides of \eqref{eq:mdid-proof-1} by $\frac{D}{A}$, we obtain
\begin{equation}
  \label{eq:mdid-proof-2}
  i\frac{D}{A} = qD + \frac{rD}{A} = \left\lfloor\frac{i}{A}\right\rfloor D +
  \frac{(i \bmod A)D}{A}.
\end{equation}
Applying \eqref{eq:mdid-proof-2} to \eqref{eq:ils-rttn-fp} of
Lemma~\ref{lem:ils-rttn-fp}, the nearest integer solution $j$ is given by
\begin{equation}
  \label{eq:mdid-proof-3}
  \begin{split}
    j & = \left\lfloor\left(\left\lfloor\frac{i}{A}\right\rfloor D + \frac{(i \bmod A)D}{A}\right) + 0.5 \right\rfloor \\
      & = \left\lfloor\frac{i}{A}\right\rfloor D + \left\lfloor\frac{(i \bmod A)D + \frac{A}{2}}{A}\right\rfloor \\
      & = \left\lfloor\frac{i}{A}\right\rfloor D + \left\lfloor\frac{\left(i \bmod A\right)D + \left\lfloor\frac{A}{2}\right\rfloor}{A}\right\rfloor,
  \end{split}
\end{equation}
where the last step (i.e., $\frac{A}{2}{\rightarrow}\lfloor\frac{A}{2}\rfloor$
in the numerator) is from the special property of the floor
function~\cite[Eq.~(3.11)]{graham94:_concr}.
\hfill\IEEEQED

\section{Proof of Theorem~\ref{thm:ds}}
\label{sec:thm-ds-proof}
Given the objective function $f(x){=}|x{-}i\frac{D}{A}|$, we can decrease its
value by decreasing $x$ as far as $x{>}i\frac{D}{A}$ or increasing $x$ as far as
$x{<}i\frac{D}{A}$. Because $x$ is constrained to be an integer, there could be
at most two integer values of $x$ minimizing $f(x)$ around
$x{=}i\frac{D}{A}$. Therefore, if we start the minimization from either side of
$x{=}i\frac{D}{A}$ and cannot decrease any further the value of $f(x)$ by
decreasing (i.e., $x{>}i\frac{D}{A}$) or increasing (i.e., $x{<}i\frac{D}{A}$)
$x$, it means that we have reached the value of $x$ minimizing $f(x)$, which is
a termination criterion for Algorithm~\ref{alg:ds}.

Having the termination criterion in mind, therefore, we consider each case of
Algorithm~\ref{alg:ds}:
  
\mycase{1}: This is a special case of Lemma~\ref{lem:ils-rttn}, where rhs of
\eqref{eq:ils-rttn-2} is zero. Because
\[
  \Delta_{0} = \left(k_{0}A - iD\right) = 0,
\]
we have from Lemma~\ref{lem:ils-rttn}
\[
  \left|jA-iD\right| \leq \left|k_{0}A-iD\right| = 0,
\]
which results in $j{=}k_{0}$.

\medskip%

When $\Delta_{0}{\neq}0$,
there exist unique integers $q,r{\in}\mathbb{N}_{0}$ such that
\begin{equation}
  |\Delta_{0}| = Aq + r,
\end{equation}
where $q{=}{\lfloor}\frac{|\Delta_{0}|}{A}{\rfloor}$ and
$r{=}|\Delta_{0}| \bmod A$ according to the division algorithm.

\mycase{2}: Because $\Delta_{0}{>}0$, we have
\begin{equation}
  k_{1} = k_{0} - \left\lfloor \dfrac{\Delta_{0}}{A} \right\rfloor = k_{0} - q,
\end{equation}
\begin{equation}
  \begin{split}
    \Delta_{1} & = \Delta_{0} \bmod A = r = \Delta_{0} - Aq \\
            & = \left(k_{0}A - iD\right) - Aq = (k_{0}-q)A - iD \\
            & = k_{1}A - iD.
  \end{split}
\end{equation}
Because $r{=}\Delta_{1}{<}A$, we could obtain the minimum value of rhs of
\eqref{eq:ils-rttn-2} by comparing $|\Delta_{1}|$ and $|\Delta_{1}{-}A|$.

If $|\Delta_{1}{-}A|{<}|\Delta_{1}|$, it suffices to show that
$|\Delta_{1}{-}A|$ is the rhs of \eqref{eq:ils-rttn-2} corresponding to
$k_{1}{-}1$, that is,
\[
  |\Delta_{1} - A| = \left|(k_{1}A - iD) - A\right| = \left|(k_{1} - 1)A -
    iD\right|.
\]
Because $|\Delta_{1}{-}A|$ is the minimum value of rhs of \eqref{eq:ils-rttn-2},
$k_{1}{-}1$ is $j$ according to Lemma~\ref{lem:ils-rttn}
(\textit{Case}~2.1). Otherwise, $k_{1}$ is $j$ (\textit{Case}~2.2).

\mycase{3}: Because $\Delta_{0}{<}0$, we have
\begin{equation}
  k_{1} = k_{0} - \left\lfloor \dfrac{\Delta_{0}}{A} \right\rfloor = k_{0} + q,
\end{equation}
\begin{equation}
  \begin{split}
    \Delta_{1} & = -(|\Delta_{0}| \bmod A) = -r = \Delta_{0} + Aq \\
            & = (k_{0}A - iD) + Aq = (k_{0}+q)A - iD \\
            & = k_{1}A - iD.
  \end{split}
\end{equation}
Because $r{=}{-}\Delta_{1}{<}A$, we could obtain the minimum value of rhs of
\eqref{eq:ils-rttn-2} by comparing $|\Delta_{1}|$ and $|\Delta_{1}{+}A|$.

If $|\Delta_{1}{+}A|{<}|\Delta_{1}|$, it suffices to show that
$|\Delta_{1}{+}A|$ is the rhs of \eqref{eq:ils-rttn-2} corresponding to
$k_{1}{+}1$, that is,
\[
  |\Delta_{1} + A| = \left|(k_{1}A - iD) + A\right| = \left|(k_{1} + 1)A -
    iD\right|.
\]
Because $|\Delta_{1}{+}A|$ is the minimum value of rhs of \eqref{eq:ils-rttn-2},
$k_{1}{+}1$ is $j$ according to Lemma~\ref{lem:ils-rttn}
(\textit{Case}~3.1). Otherwise, $k_{1}$ is $j$ (\textit{Case}~3.2).
\hfill\IEEEQED

\section{Proof of Lemma~\ref{lem:ds-dcmp}}
\label{sec:lem-ds-dcmp-proof}
If suffices to show that
$ds(i_{2},D,A,\kappa_{2},\tilde{\Delta}_{1}){=}j{-}j_{1}$.

In fact, $\Delta_{0}$ in line 2 of $ds(i_{2},D,A,\kappa_{2},\tilde{\Delta}_{1})$
is identical to that of $ds(i,D,A,\kappa_{2}{+}j_{1},0)$; because
$\tilde{\Delta}_{1}{=}j_{1}A{-}i_{1}D$, $\Delta_{0}$ in line 2 is given by
\begin{equation}
  \label{eq:lem-ds-dcmp-proof-1}
  \begin{split}
    \Delta_{0} & = (\kappa_{2} - i_{2})A + i_{2}(A - D) + \tilde{\Delta}_{1} \\ 
               & = (\kappa_{2}A - i_{2}D) + \tilde{\Delta}_{1} \\
               & = (\kappa_{2}A - i_{2}D) + (j_{1}A - i_{1}D) \\
               & = (\kappa_{2} + j_{1})A - (i_{1} + i_{2})D \\
               &  = (\kappa_{2} + j_{1})A - iD.
  \end{split}
\end{equation}

The only internal difference between two function calls
$ds(i_{2},D,A,\kappa_{2},\tilde{\Delta}_{1})$ and
$ds(i,D,A,\kappa_{2}{+}j_{1},0)$ is the value of $k_{0}$ (i.e., $\kappa_{2}$
vs. $\kappa_{2}{+}j_{1}$) from line 4 and on, which determines the return value
directly (i.e., line 4) or indirectly through $k_{1}$ (i.e., lines 9, 12, 18,
and 21). As the value of $k_{0}$ does not affect the conditional processing,
which depends only on the value of $\Delta_{0}$ and $\Delta_{1}$, the return
value between the two function calls differs by $j_{1}$. As
$ds(i,D,A,\kappa_{2}{+}j_{1},0)$ returns $j$,
$ds(i_{2},D,A,\kappa_{2},\Delta_{1})$ returns $j{-}j_{1}$.
\hfill\IEEEQED

\section{Proof of Theorem~\ref{thm:ds-dcmp}}
\label{sec:thm-ds-dcmp-proof}
The base case of $N{=}1$ holds by Theorem~\ref{thm:ds}.

Assume that \eqref{eq:thm-ds-dcmp-1} and \eqref{eq:thm-ds-dcmp-2} give
$\check{j}_{l}{=}\sum_{n=1}^{l}j_{n}$, the nearest integer solution to
$\check{i}_{l}\frac{D}{A}$, where $\check{i}_{l}{=}\sum_{n=1}^{l}i_{n}$, and
$\check{\Delta}_{l}$, the value of $\check{j}_{l}A{-}\check{i}_{l}D$, for an
integer $l{\geq}1$. Based on Lemma~\ref{lem:ds-dcmp}, we can obtain
$\check{j}_{l+1}$, the nearest integer solution to $\check{i}_{l+1}\frac{D}{A}$,
where $\check{i}_{l+1}{=}\sum_{n=1}^{l+1}i_{n}$, and $\check{\Delta}_{l+1}$, the
value of $\check{j}_{l+1}A{-}\check{i}_{l+1}D$, as follows:
\begin{equation}
  \label{eq:thm-ds-dcmp-proof-1}
  \begin{split}
    \check{j}_{l+1} & = \check{j}_{l} + j_{l+1} = \sum_{n=1}^{l+1}j_{n}, \\
    \tilde{\Delta}_{l+1} & = \tilde{\Delta}_{l+1},
  \end{split}
\end{equation}
where
\begin{equation}
  \label{eq:thm-ds-dcmp-proof-2}
  \begin{split}
    (\check{j}_{l}, \tilde{\Delta}_{l}) & = ds(\check{i}_{l}, D, A, \check{\kappa}_{l}, 0), \\
    (j_{l+1}, \tilde{\Delta}_{l+1}) & = ds(i_{l+1}, D, A, \kappa_{l+1}, \tilde{\Delta}_{l}),
  \end{split}
\end{equation}
and $\check{\kappa}_{l}{=}\sum_{n=1}^{l}\kappa_{n}$ and
$\kappa_{l+1}{\in}\mathbb{N}_{0}$ are the initial guesses of $\check{j}_{l}$ and
$j_{l+1}$, respectively.

Therefore, \eqref{eq:thm-ds-dcmp-1} and \eqref{eq:thm-ds-dcmp-2} give $j$, the
nearest integer solution to $i\frac{D}{A}$, where $i{=}\sum_{n=1}^{N}i_{n}$, and
$\Delta{=}jA{-}iD$ for every integer $N{\geq}1$ by mathematical induction.
\hfill\IEEEQED

\balance 


\end{document}